\documentclass[a4paper,UKenglish,cleveref, autoref]{lipics-v2019}

\usepackage{amsmath, amssymb, amscd, amsthm, amsfonts}
\usepackage{graphicx}
\usepackage{hyperref}
\usepackage{mathtools,xparse,bigstrut}
\usepackage{physics}
\usepackage{algorithm}
\usepackage{algpseudocode}
\usepackage{float}

\newfloat{algorithm}{t}{lop}

\DeclarePairedDelimiter{\absl}{\lvert}{\rvert}
\DeclarePairedDelimiter{\nor}{\lVert}{\rVert}
\NewDocumentCommand{\normL}{ s O{} m }{%
  \IfBooleanTF{#1}{\nor*{#3}}{\nor[#2]{#3}}_{2}%
}
\NewDocumentCommand{\normF}{ s O{} m }{%
  \IfBooleanTF{#1}{\nor*{#3}}{\nor[#2]{#3}}_{F}%
}
\newcommand{\Conv}{{\normalfont \textrm{Conv}}}
\newcommand{\row}{{\normalfont \textrm{row}}}
\newcommand{\col}{{\normalfont \textrm{col}}}
\newcommand{\poly}{{\normalfont \textrm{poly}}}
\newcommand{\polylog}{{\normalfont \textrm{polylog}}}
\newcommand{\diag}{{\normalfont \textrm{diag}}}
\newcommand{\inv}{{\normalfont \textrm{inv}}}

\newcommand{\thres}{{\normalfont \textrm{th}}}
\newcommand{\Complex}{\mathbb{C}}

\newcommand{\Realnn}{\mathbb{R}_{\ge 0}}
\newcommand{\ceil}[1]{\left\lceil #1 \right\rceil}

\makeatletter
\newtheorem*{rep@theorem}{\rep@title}
\newcommand{\newreptheorem}[2]{%
\newenvironment{rep#1}[1]{%
 \def\rep@title{#2 \ref{##1}}%
 \begin{rep@theorem}}%
 {\end{rep@theorem}}}
\makeatother

\newreptheorem{theorem}{Theorem}

\bibliographystyle{plainurl}

\title{Quantum-Inspired Classical Algorithms for Singular Value Transformation} 


\author{Dhawal Jethwani}{
Indian Institute of Technology (BHU), Varanasi, India}{dhawal.jethwani.cse15@iitbhu.ac.in}{[orcid]}{}

\author{Fran{\c c}ois Le Gall}{
Nagoya University, Japan 
}{legall@math.nagoya-u.ac.jp}{[orcid]}{}

\author{Sanjay K. Singh}{
Indian Institute of Technology (BHU), Varanasi, India}{sks.cse@iitbhu.ac.in}{[orcid]}{}

\authorrunning{D. Jethwani, F. Le Gall and S.\,K. Singh}

\Copyright{Dhawal Jethwani, Fran{\c c}ois Le Gall and Sanjay K. Singh}

\ccsdesc[500]{Theory of computation~Design and analysis of algorithms}

\keywords{Sampling algorithms, quantum-inspired algorithms, linear algebra}

\category{}




\acknowledgements{The authors are grateful to Andr{\'{a}}s Gily{\'{e}}n for discussions and comments about the manuscript. FLG was partially supported by JSPS KAKENHI grants Nos.~JP15H01677, JP16H01705, JP16H05853, JP19H04066 and by the MEXT Quantum Leap Flagship Program (MEXT Q-LEAP) grant No.~JPMXS0118067394. Part of this work has been done when DJ was visiting Kyoto University.}

\nolinenumbers 

\hideLIPIcs  

\EventEditors{Javier Esparza and Daniel Kr{\'a}l'}
\EventNoEds{2}
\EventLongTitle{45th International Symposium on Mathematical Foundations of Computer Science (MFCS 2020)}
\EventShortTitle{MFCS 2020}
\EventAcronym{MFCS}
\EventYear{2020}
\EventDate{August 24--28, 2020}
\EventLocation{Prague, Czech Republic}
\EventLogo{}
\SeriesVolume{170}
\ArticleNo{47}

\begin{document}

\maketitle

\begin{abstract}
A recent breakthrough by Tang (STOC 2019) showed how to ``dequantize'' the quantum algorithm for recommendation systems by Kerenidis and Prakash (ITCS 2017). The resulting algorithm, classical but ``quantum-inspired'', efficiently computes a low-rank approximation of the users' preference matrix. Subsequent works have shown how to construct efficient quantum-inspired algorithms for approximating the pseudo-inverse of a low-rank matrix as well, which can be used to (approximately) solve low-rank linear systems of equations. In the present paper, we pursue this line of research and develop quantum-inspired algorithms for a large class of matrix transformations that are defined via the singular value decomposition of the matrix. In particular, we obtain classical algorithms with complexity polynomially related (in most parameters) to the complexity of the best quantum algorithms for singular value transformation recently developed by Chakraborty, Gily{\'{e}}n and Jeffery (ICALP 2019) and Gily{\'e}n, Su, Low and Wiebe (STOC 2019).  
\end{abstract}

\section{Introduction}
\label{sec-intro}
\noindent{\bf Background.}
One of the most celebrated quantum algorithms discovered so far is the HHL  algorithm~\cite{HHL09}. This quantum algorithm solves a system of linear equations of the form $Ax=b$, where $A$ is an $n\times n$ matrix and $b$ is an $n$-dimensional vector, in time polynomial in $\log n$ when the matrix $A$ is sufficiently sparse and well-conditioned. This is exponentially better that the best known classical algorithms, which run in time polynomial in $n$ (see also \cite{AA12,CKS17,CJS13,WZP18} for improvements and relaxations of the assumptions). There are nevertheless two significant caveats. First, the input should be given in a way that allows very specific quantum access. In particular, the HHL algorithm requires the ability to efficiently create a quantum state proportional to~$b$. The second, and main, caveat is that the output of the HHL algorithm is not the solution $x$ of the linear system (which is an $n$-dimensional vector) but only a $O(\log n)$-qubit quantum state proportional to this vector. While measuring this quantum state can give some meaningful statistics about the solution~$x$, this naturally does not give enough information to obtain the whole vector $x$. In this perspective, the HHL algorithm does not explicitly solve the system of equations, but instead enables sampling from the solution, in a very efficient way. 

There have been several proposals to apply the HHL algorithm (and one of its core components, phase estimation) to linear-algebra based machine learning tasks, leading for instance to the discovery of quantum algorithms for principal component analysis (PCA)~\cite{LMR14} and quantum support vector machine~\cite{RML14}. We refer to~\cite{BWP17} for a recent survey on this field called quantum machine learning. One of the most convincing applications of quantum algorithms to machine learning has been speeding up recommendation systems~\cite{KPQ16}. In machine learning, recommendations systems are used to predict the preferences of users. From a mathematical perspective, the core task in recommendation systems can be modeled as follows: given an $m\times n$ matrix $A$ (representing the preferences of $m$ users) and an index $i\in[m]$ (representing one specific user), sample from the $i$-th row of a low-rank approximation of $A$. Kerenidis and Prakash \cite {KPQ16} showed how to adapt the HHL algorithm to solve this problem in time polynomial in $\log(mn)$, which was exponentially better than the best known classical algorithms for recommendation systems. 

Similarly to the HHL algorithm, the quantum algorithm from \cite {KPQ16} works only under the assumption that the input is stored in an appropriate structure (called ``Quantum Random-Access Memory'', or ``QRAM'') that allows specific quantum access. Very recently, Tang~\cite{TAN19} has shown that assuming that the input is stored in a classical data structure that allows $\ell^2$-norm sampling access (i.e., allows sampling rows with probability proportional to their $\ell^2$-norm), $\polylog (mn)$-time classical algorithms for recommendation systems can be designed as well. This results eliminates one of the best examples of quantum speedup for machine learning. The paper \cite{TAN19} also introduced the term ``quantum-inspired algorithms'' to refer to such classical algorithms obtained by ``dequantizing'' quantum algorithms. 

More quantum-inspired algorithms have soon been developed: Tang \cite{TAN18B} first showed how to construct classical algorithms for PCA that essentially match the complexity of the quantum algorithm for PCA from \cite{LMR14} mentioned above. Gily{\'e}n, Lloyd and Tang~\cite{GLT18} and, independently, Chia, Lin and Wang \cite{CHI18} have shown how to obtain new classical algorithms for solving linear systems of equations, which also essentially match the complexity of the quantum algorithms when the input matrix has low-rank (see below for details). We also refer to \cite{ADL19} for a discussion of the performance of these quantum-inspired algorithms in practice.

\vspace{2mm} 
 
\noindent{\bf Singular value transformation.}
The Singular Value Decomposition (SVD) of a matrix $M\in \Complex^{m\times n}$ is a factorization of the form $M=U\Sigma V^{*}$ where $U\in \Complex^{m\times m}$ and $V\in \Complex^{n\times n}$ are unitary matrices and $\Sigma$ is a $m\times n$ diagonal matrix with $\min(m,n)$ non-negative real numbers on the diagonal, where $V^\ast$ denotes the complex-conjugate transpose of V. A crucial property is that this decomposition exists for any complex matrix. Given a function $f\colon\Realnn\to\Realnn$, the singular value transformation associated with~$f$, denoted~$\Phi_f$, is the function that  maps the matrix $M=U\Sigma V^{*}$ to the matrix $\Phi_f(M)=U\ \Sigma_f V^{*}$ where $\Sigma_f$ is the diagonal matrix obtained from $\Sigma$ by replacing each diagonal entry $\sigma$ by $f(\sigma)$. We refer to Definition~\ref{def:SVT} in Section \ref{sec-pre} for more details.

An important example is obtained by taking the ``pseudo-inverse'' function $\inv\colon \Realnn\to\Realnn$ such that $\inv(x)=1/x$ if $x>0$ and $\inv(0)=0$. Solving a linear system of equations $Ax=b$ corresponds\footnote{Indeed, one solution is given by $x=A^+ b$, where $A^+$ represents the Moore-Penrose pseudo-inverse of the matrix $A$ (or simply the inverse when $A$ is invertible). It is easy to check that $A^+=\Phi_\inv(A^*).$} to calculating (or approximating) the vector $\Phi_\inv(A^\ast)b$. If all the singular values of $A$ are between $1/\kappa$ and 1, for some value $\kappa$, the quantum-inspired algorithms from~\cite{CHI18,GLT18} solve this task in time $\poly\left(k_A, \kappa,\normF{A}, 1/\epsilon,\log(mn)\right)$, where~$k_A$ denotes the rank of~$A$, $\normF{A}$ denotes the Frobenius norm of $A$ and $\epsilon$ denotes the approximation error.\footnote{The term $\log(mn)$ represents the time complexity of implementing sampling and query operations (see Proposition \ref{tanm} in Section \ref{sub:data}), which we also include in the complexity.} One crucial point here is that the dependence on the dimensions of the matrix is only poly-logarithmic. Another important point is that the best known quantum algorithms (see~\cite{CGJ19,GLT18}) enable $\ell^2$-norm sampling from the output in time $O\left(\kappa\normF{A}\polylog(mn/\epsilon)\right)$ in the QRAM input model. This means that, except for the dependence in $\epsilon$, for low-rank matrices the classical running time is polynomially related to the quantum running time.

The core computational problem in recommendation systems can also be described as approximating the $i$-row of the matrix $\Phi_{\thres}(A)$ for the threshold function $\thres\colon \Realnn\to\Realnn$ such that $\thres(x)=x$ if $x\ge\sigma$ and $\thres(x)=0$ otherwise (for some appropriate threshold value~$\sigma$). This corresponds to approximating the vector $\Phi_{\thres}(A^\ast)b$ where $b$ is the vector with~$1$ in the $i$-th coordinate and zero elsewhere. Ref.~\cite{TAN19} shows how to solve this problem in time $\poly\left(\normF{A}/\sigma, 1/\epsilon,\log(mn)\right)$. (For the value $\sigma$ chosen for recommendation systems, the term $\normF{A}/\sigma$ becomes an upper bound on the rank of a low-rank approximation of $A$.)

\vspace{2mm}

\noindent{\bf Our results.}
In this paper we significantly extend the class of functions for which the singular value transformation can be efficiently computed by ``quantum-inspired'' classical algorithms. 
The formal and most general statements of our results are given in Section \ref{sec:main}. For the sake of readability, in this introduction we only describe our results for a restricted (but still very general) class of ``smooth'' functions. Let $\Realnn$ and $\mathbb{R}_{>0}$ denote the sets of non-negative numbers and positive numbers, respectively. We say below that a function $f\colon\Realnn\to\Realnn$ is ``smooth'' if~$f$ is differentiable in $\mathbb{R}_{>0}$ and the following condition holds: for any $\alpha,\beta\ge 1$, over the interval $[1/\alpha,\beta]$ the maximum values of $f$ and its derivative $f'$ can be upper bounded by a polynomial function of $\alpha$ and $\beta$. We are mostly interested in functions such that $f(0)=0$ since typically we do not want the transformation to increase the rank.

Our main results are the following two theorems (we refer to Section \ref{sec:main} for the formal versions).\footnote{These informal versions can be derived from the formal versions given in Section \ref{sec:main} by observing that $\kappa_{2}/\normL{A}\le \kappa$ if all the singular values of $A$ are between $1/\kappa$ and 1. The smoothness condition implies that both $\Omega$ and $\phi$ are upper bounded by a polynomial of $\kappa$ and $\normF{A}$. Note that for Theorem \ref{th2} we actually need an additional smoothness condition expressing that the minimum value of $f$ cannot be too small as well (see the term $\omega$ in the formal version of Theorem \ref{th2}).}

\begin{theorem}[Informal Version]\label{th1}
    Let $f\colon\Realnn\to\Realnn$ be any smooth function such that $f(0)=0$.
    For any sufficiently small $\epsilon>0$, there exists a classical algorithm that has sampling access to a matrix $A \in \Complex^{m\times n}$ with singular values in  $[1/\kappa, 1]$ and to a non-zero vector $b\in \Complex^{m}$, receives as input an index $i\in [n]$, outputs with high probability an approximation of the $i$-th coordinate of the vector $\Phi_{f}(A^{*})b$ with additive error~$\epsilon$, and has $\poly\left(\kappa, \normF{A}, 1/\epsilon,\log(mn)\right)$ time complexity.
\end{theorem}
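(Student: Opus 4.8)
The plan is to reduce the problem of computing an entry of $\Phi_f(A^*)b$ to a sequence of operations that can be simulated by $\ell^2$-norm sampling, following the template established by the quantum-inspired algorithms of \cite{GLT18,CHI18} for the pseudo-inverse. The first step is to approximate $f$ on the relevant spectral interval $[1/\kappa,1]$ by a low-degree polynomial. Since $f$ is ``smooth'', both $f$ and $f'$ are polynomially bounded on $[1/\kappa,1]$, so a standard Chebyshev (or Jackson-type) approximation argument produces a polynomial $p$ of degree $d=\poly(\kappa,1/\epsilon)$ with $\sup_{x\in[1/\kappa,1]}|f(x)-p(x)|$ small; because $f(0)=0$ we can arrange $p(0)=0$ as well, so that $\Phi_p$ does not increase the rank in a harmful way. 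This reduces the task to approximating the $i$-th coordinate of $\Phi_p(A^*)b$, and since $\Phi$ is defined entrywise on singular values, $\Phi_p(A^*)b$ is a linear combination $\sum_j c_j \Phi_{x^j}(A^*) b$ of ``power'' transformations, where $\Phi_{x^{2k+1}}(A^*) = A^* (A A^*)^k$ and, more generally, odd and even powers correspond to alternating products of $A$ and $A^*$.

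The second step is to implement each power transformation approximately using the low-rank sketching machinery that underlies Proposition~\ref{tanm}: from $\ell^2$-norm sampling access to $A$ one draws a small number of rows and columns to build a succinct description of a matrix $R$ whose rows span (approximately) the top singular subspace of $A$, compute a small Gram-type matrix, and use its spectral decomposition to obtain approximate singular values $\widetilde\sigma_\ell$ and approximate right/left singular vectors represented implicitly as short linear combinations of rows/columns of $A$. Applying $p$ entrywise to the $\widetilde\sigma_\ell$ then yields an implicit description of an approximation to $\Phi_p(A^*)$, and multiplying by $b$ reduces to estimating a handful of inner products of the form $\langle v, b\rangle$ and $\langle u, A b\rangle$, each of which can be estimated to additive error by the inner-product estimation subroutine (again part of the $\ell^2$-sampling toolkit), since all vectors involved are either given with sampling access or are short linear combinations of such vectors. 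Collecting the estimates with the coefficients $c_j$ and the approximate singular values produces an estimate of the target coordinate; one then tracks how the polynomial-approximation error, the low-rank sketch error, and the inner-product estimation error propagate, and chooses all internal precision parameters as suitable polynomials of $\kappa$, $\normF{A}$, $1/\epsilon$ so that the total error is at most~$\epsilon$ with high probability, and the total running time is $\poly(\kappa,\normF{A},1/\epsilon,\log(mn))$.

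The main obstacle I expect is controlling the error amplification when passing from the approximate singular values/vectors to $\Phi_p$ evaluated on them. Unlike the pseudo-inverse case, where one only needs a Lipschitz-type bound on $1/x$ away from $0$, here we must bound $|p(x)-p(\widetilde x)|$ uniformly, and $p$ — being a high-degree polynomial approximating $f$ — can have large derivative away from $[1/\kappa,1]$ and near the smaller singular values; this is precisely why the smoothness hypothesis bounds $f'$ (hence, with care, $p'$ on the spectral interval) and why the stated complexity is polynomial in $\kappa$ rather than independent of it. A related delicate point is that the low-rank sketch only approximates the \emph{dominant} part of the spectrum, so one must argue that the contribution of the tail (small singular values) to $\Phi_p(A^*)b$ is negligible given $\normF{A}$ and the lower bound $1/\kappa$ on the nonzero singular values; separating the "kept" and "discarded" subspaces cleanly, and bounding both the projection error and the perturbation of $p$ on each, is the heart of the analysis. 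Everything else — the Chebyshev degree bound, the sampling complexity of the sketch, the inner-product estimation accuracy — is essentially routine once these two estimates are in place.
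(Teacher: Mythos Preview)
Your plan takes a genuinely different route from the paper, and it has a real gap. The paper never approximates $f$ by a polynomial. Instead it applies $f$ \emph{directly} to the singular values of the small sketch $W$, forming $P'=\Phi_{\inv}(W)\Phi_{f}(W^{*})\Phi_{\inv}(W)\Phi_{\inv}(W^{*})$, and proves that $S^{*}P'SA^{*}b\approx \Phi_{f}(A^{*})b$. Two ingredients drive that analysis and are absent from your sketch: (i) the algebraic identity $\Phi_{f}(A^{*})=\Phi_{h}(A^{*}A)A^{*}$ with $h(x)=f(\sqrt{x})/\sqrt{x}$, which converts the problem into comparing $\Phi_{h}$ on the positive semi-definite matrices $A^{*}A,S^{*}S,SS^{*},WW^{*}$; and (ii) a matrix-function perturbation bound (Lemma~\ref{fpsd}, derived from \cite{MIC14}) of the form $\normF{\Phi_{g}(X)-\Phi_{g}(Y)}\le \normF{X-Y}\cdot\max_{\sigma}\big(|g'(\sigma)|+|g(\sigma)/\sigma|\big)$ for PSD $X,Y$. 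This lemma needs only that $g$ be differentiable with bounded derivative on the spectral interval --- exactly the smoothness hypothesis --- so the polynomial step buys nothing and the paper dispenses with it.

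The gap in your version is precisely at the step you flag as ``the heart of the analysis''. First, your proposal conflates two strategies: expanding $\Phi_{p}(A^{*})b$ as $\sum_j c_j\Phi_{x^j}(A^{*})b$ and iterating products, versus computing approximate singular data and applying $p$ entrywise; you describe both but carry out neither. Second, Jackson-type approximation of $f$ by $p$ on $[1/\kappa,1]$ gives no control of $p'$ on that interval, so your intended bound on $|p(\sigma)-p(\widetilde\sigma)|$ via $p'$ is unjustified without a separate simultaneous-approximation argument. Third, and more seriously, a scalar Lipschitz bound on $p$ is not enough: the approximate singular \emph{vectors} also move, and passing from $\normF{X-Y}$ small to $\normF{\Phi_{g}(X)-\Phi_{g}(Y)}$ small requires a matrix-level statement like Lemma~\ref{fpsd}, which you do not supply. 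Finally, your worry about a ``tail'' of small singular values is misplaced here: by hypothesis every nonzero singular value of $A$ lies in $[1/\kappa,1]$, and Weyl's inequality then confines the singular values of $S$ and $W$ to the slightly enlarged interval $L$, so there is no discarded subspace to control.
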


\begin{theorem}[Informal Version]\label{th2}
    Let $f\colon\Realnn\to\Realnn$ be any smooth function such that $f(0)=0$ and $f(x)>0$ for all $x>0$.
    For any sufficiently small $\epsilon>0$, there exists a classical algorithm that has sampling access to a matrix $A \in \Complex^{m\times n}$ with singular values in  $[1/\kappa, 1]$ and to a non-zero vector $b\in \Complex^{m}$, and $\ell^2$-samples with high probability from a distribution $\epsilon$-close in total variation distance to the distribution associated with the vector $\Phi_{f}(A^{*})b$, and has $\poly\left(\kappa, \normF{A}, 1/\epsilon,\log(mn)\right)$ time complexity.
\end{theorem}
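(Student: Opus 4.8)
\medskip
\noindent\textbf{Proof plan for Theorem~\ref{th2}.}
The plan is to produce an implicit description of the target vector $y:=\Phi_{f}(A^{*})b$ of the form ``$A^{*}$ applied to an explicit sparse vector'', and then to $\ell^{2}$-sample from $y$ by running the row-combination sampler of Proposition~\ref{tanm} on that description. Start with the algebraic reduction. Since every nonzero singular value of $A$ is at least $1/\kappa$, the rank of $A$ is at most $\kappa^{2}\normF{A}^{2}$, so $A$ is effectively low rank. Writing $A=U\Sigma V^{*}$, one has $y=V\Sigma_{f}U^{*}b=g(A^{*}A)\,A^{*}b$ with $g(t):=f(\sqrt t)/\sqrt t$, and $A^{*}b$ lies in the row space of $A$, on which $A^{*}A$ has all eigenvalues in $[1/\kappa^{2},1]$. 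I will approximate $g$ on a constant-factor enlargement of $[1/\kappa^{2},1]$ by a polynomial of the special form $P(t)=t\cdot h(t)$ --- i.e.\ approximate $g(t)/t$ by $h$ --- so that $P(0)=0$; the smoothness hypotheses make $g(t)/t$ Lipschitz with constant $\poly(\kappa,\normF{A})$ there, hence $\epsilon_{0}$-approximable by an $h$ of degree $d=\poly(\kappa,\normF{A},1/\epsilon_{0})$. The extra hypothesis that $f$ is bounded \emph{below} (the ``$\omega$'' term) keeps $\|y\|$ from being too small relative to the useful part of $b$, so that a small additive error on $g$ becomes a small \emph{relative} error on $y$; taking $\epsilon_{0}$ a suitable $\poly$ factor gives $\|P(A^{*}A)A^{*}b-y\|\le(\epsilon/4)\|y\|$.

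\medskip
\noindent Next, implement $P(A^{*}A)A^{*}b$ by subsampling. Let $R\in\Complex^{r\times m}$ be the $\ell^{2}$-norm row sampler of Proposition~\ref{tanm} with $r=\poly(\kappa,\normF{A},1/\epsilon,\log(mn))$, so that $\normF{(RA)^{*}(RA)-A^{*}A}\le\delta$ with high probability. With $B:=(RA)^{*}(RA)$ and $N:=(RA)(RA)^{*}\in\Complex^{r\times r}$ (which have the same nonzero spectrum), one has the identity $P(B)A^{*}b=B\,h(B)\,A^{*}b=(RA)^{*}h(N)\,(RAA^{*}b)=A^{*}\beta$, where $\beta:=R^{*}h(N)(RAA^{*}b)$ is an $O(r)$-sparse vector supported on the sampled rows. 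The objects $N$ and $RAA^{*}b$ are small and their entries are inner products of rows of $A$ (with one another and with $b$), estimable to additive error $\delta$ by the standard inner-product primitive from $\ell^{2}$-samples of $A$ and of $b$; and $h(N)$ times $RAA^{*}b$ is computed by the Chebyshev recurrence after first diagonalizing the $r\times r$ sketch and discarding eigenvalues below a threshold $\Theta(1/\kappa^{2})$ --- these correspond to directions outside the effective row space of $A$ and to estimation noise, and discarding them is what prevents the (possibly enormous) value of $h$ near $0$ from corrupting the answer. On the retained spectrum, which lies in the constant-factor enlargement of $[1/\kappa^{2},1]$, Markov's inequality bounds $|h'|$ by $\poly(d,\kappa)$, so the errors $\epsilon_{0}$ and $\delta$ propagate through the degree-$d$ matrix polynomial with only a $\poly(d)$ blow-up; choosing $\delta$ a suitable $\poly$ factor yields $\|A^{*}\beta-y\|\le(\epsilon/2)\|y\|$.

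\medskip
\noindent Finally, $\ell^{2}$-sample from $\hat y:=A^{*}\beta$. Since $\beta$ is explicit and $O(r)$-sparse, $\hat y=\sum_{j\in\mathrm{supp}(\beta)}\bar\beta_{j}\,(A^{*}e_{j})$ is an explicit combination of at most $r$ rows of $A$, so Proposition~\ref{tanm} lets us draw an $\ell^{2}$-sample (and estimate $\|\hat y\|$ and individual coordinates) in $\poly(\normF{A},\log(mn))$ time per sample, with rejection overhead $\big(\sum_{j\in\mathrm{supp}(\beta)}|\beta_{j}|^{2}\|A_{j\cdot}\|^{2}\big)/\|\hat y\|^{2}$; the subsampling weights inside $\beta$ cancel against the row norms, so this simplifies to $\poly(\kappa,\normF{A})\cdot\|b\|^{2}/\|y\|^{2}$, which is $\poly(\kappa,\normF{A})$ by the lower bound on $\|y\|$. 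A routine accounting then shows the output distribution is $\epsilon$-close to the $\ell^{2}$-distribution $D_{y}$ of $y$: one sums the polynomial-approximation error, the subsampling error, the Monte-Carlo errors in the estimated entries of $N$, of $RAA^{*}b$, and of the norms, and the intrinsic error of the row-combination sampler, each driven below a fixed $\poly$ fraction of $\epsilon$, and uses that $\|\hat y-y\|\le(\epsilon/2)\|y\|$ forces $D_{\hat y}$ to be $O(\epsilon)$-close to $D_{y}$.

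\medskip
\noindent The main difficulty, beyond the lengthy but routine error bookkeeping, is keeping every overhead polynomial, which rests on two points. First, one needs $\|y\|$ bounded below by $1/\poly(\kappa,\normF{A})$ times $\|b\|$; this is precisely where the lower-bound hypothesis on $f$ is used --- it rules out near-cancellation in $y=\sum_{i}f(\sigma_{i})(u_{i}^{*}b)v_{i}$ --- together with the (necessarily assumed) fact that $b$ is not almost orthogonal to the column space of $A$, and it is the essential reason Theorem~\ref{th2} requires a hypothesis that Theorem~\ref{th1} does not. Second --- and this is a trap one must avoid --- the approximating polynomial has to vanish at $0$ and the sketch's near-zero spectrum has to be truncated: otherwise the constant term $P(0)$, which for a merely Lipschitz $f$ can be exponentially large, would attach itself to a $\|b\|$-scale vector in the implicit description of $y$ and wreck the rejection bound. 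Once these are in place, every parameter above is $\poly(\kappa,\normF{A},1/\epsilon,\log(mn))$, which gives the claimed running time.
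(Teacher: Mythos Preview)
Your plan is correct and, once the polynomial-approximation layer is peeled back, structurally matches the paper's: both write the target as $S^{*}M(SA^{*}b)$ with $S$ the row-sketch and $M$ an $r\times r$ matrix obtained by applying (an approximation of) the map $t\mapsto f(\sqrt t)/t^{3/2}$ to an estimate of $SS^{*}$, then sample via the row-combination primitive (Lemma~\ref{samp}); your sparse vector $\beta=R^{*}[\cdots]$ is exactly the paper's $P'z$ embedded back into $\Complex^{m}$, and your lower bound on $\nor{y}$ via $\omega$ and the projection of $b$ is precisely the paper's inequality~(\ref{eq:proj}). The one genuine difference is the perturbation analysis. You approximate $g(t)/t$ by a polynomial $h$ and control $|h'|$ via Markov's inequality, which works but introduces an auxiliary degree parameter $d$ and forces the explicit thresholding step (your estimate $\hat N$ of $SS^{*}$, built entrywise from inner-product primitives, has noisy near-zero eigenvalues). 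The paper instead proves a direct perturbation bound, Lemma~\ref{fpsd} (via Gil~\cite{MIC14}): for positive semidefinite $X,Y$ and any $g$ with $g(0)=0$ differentiable on the relevant interval, $\normF{\Phi_{g}(X)-\Phi_{g}(Y)}\le\normF{X-Y}\cdot\max_{\sigma}\{|g'(\sigma)|+|g(\sigma)/\sigma|\}$. This lets the paper apply $f$ \emph{directly} to the singular values of an exact small matrix $W$ (obtained by column-sampling $S$ so that $WW^{*}\approx SS^{*}$), with no polynomial approximation and no explicit threshold, and yields cleaner constants. Your route is more elementary in that it avoids matrix-function perturbation theory, at the cost of an extra layer of bookkeeping; the paper's is shorter once Lemma~\ref{fpsd} is available.
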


Note that instead of stating our results for the transformation $\Phi_{f}(A)$ we state them for the transformation $\Phi_{f}(A^{*})=(\Phi_{f}(A))^*$ in Theorems \ref{th1} and \ref{th2}. The reason is that this simplifies the presentation of our algorithms and makes the comparison with prior works easier.

Theorems \ref{th1} and \ref{th2} show that under the same assumptions (namely, sampling access to the input) and similar requirements for the output (i.e., outputting one coordinate of $\Phi_{f}(A^{*})b$ or sampling from the associated distribution) as the prior works on quantum-inspired algorithms, we can efficiently compute classically the singular value transformation for any smooth enough function. This extends the results from \cite{CHI18,GLT18,TAN19} and significantly broadens the applicability of quantum-inspired algorithms.

Fast quantum algorithms have been constructed in recent works \cite{CGJ19,GSL19} for singular value transformations. For the class of smooth functions we consider, the quantum running time obtained would be $O\left(\poly\left(\kappa,\normF{A},\log(mn/\epsilon)\right)\right)$ in the QRAM input model. Our results thus show that except possibly for the dependence on $\epsilon$, we can again obtain classical algorithms with running time polynomially related to the quantum running time.

\vspace{2mm} 

\noindent{\bf Overview of our approach.}
We use the same sampling methods as in \cite{ADL19,CHI18,FKV04,GLT18,TAN19}: we first sample $r$ rows from the input matrix $A\in \Complex^{m\times n}$ according to probability proportional to the row norms, which gives (after normalization) a matrix $S\in\Complex^{r\times n}$. We then do the same with matrix $S$, this time sampling $c$ columns, which gives (after normalization) a matrix $W\in \Complex^{r\times c}$. The analysis of this process, which has been done in the seminal work by Frieze, Kannan and Vempala \cite{FKV04}, shows that with high probability we have $A^*A\approx S^*S$ and $SS^*\approx WW^\ast$ when $r$ and $c$ are large enough (but still much smaller than $m$ and $n$). Since $W$ is a small matrix, we can then afford to compute its SVD. 

The main contribution of this paper is the next step (and its analysis). We show how to use the SVD of the matrix $W$ in order to compute the singular value transformation $\Phi_f$. Using the SVD of $W$, we first compute the matrices $\Phi_{\inv}(W)$, $\Phi_{\inv}(W^*)$ and $\Phi_{f}(W)$. We then compute the matrix $P'=\Phi_{\inv}(W)\Phi_{f}(W^{*})\Phi_{\inv}(W)\Phi_{\inv}(W^{*})\in\Complex^{r\times r}$. This matrix $P'$ is the output of Algorithm \ref{alg1} presented in Section \ref{subsec:alg1}. Our central claim is the following:
\begin{equation}
    S^\ast P' S A^\ast\approx \Phi_f(A^\ast).
    \label{eq:central}
\end{equation}

Proving (\ref{eq:central}) and quantifying the quality of the approximation is our main technical contribution. This is done in Proposition \ref{ovec} (which itself relies on several lemmas proved in Sections~\ref{sub1} and \ref{subsec:alg1}). Finally, using similar post-processing techniques as in prior works \cite{CHI18,TAN19}, from the output~$P'$ of Algorithm \ref{alg1} we can efficiently approximate coordinates of $\Phi_f(A^\ast)b$ and sample from $\Phi_f(A^\ast)b$. This post-processing is described in Algorithms \ref{alg2} and \ref{alg3} in Section \ref{sub3}.

We now give an outline of the main ideas used to establish (\ref{eq:central}). The basic strategy is to exploit the relations $A^*A\approx S^*S$ and $SS^*\approx WW^\ast$ mentioned above.  Our first insight is to define the function $h\colon \Realnn\to\Realnn$ such that $h(x)=f(\sqrt{x})/\sqrt{x}$ if $x>0$ and $h(0)=0$, and observe that $\Phi_f(A^\ast)=\Phi_h(A^\ast A)A^\ast$. We then prove, in Lemma~\ref{fpsd}, that $A^*A\approx S^*S$ implies $\Phi_h(A^\ast A)\approx \Phi_h(S^\ast S)$. The next natural step would be to relate $\Phi_h(S^\ast S)$ and $\Phi_h(W^\ast W)$, but this cannot be done directly since the only guarantee is $SS^*\approx WW^\ast$, and not $S^*S\approx W^*W$. Instead, we observe that $\Phi_h(S^*S)=S^\ast P S$, where $P=\Phi_\inv(S)\Phi_f(S^*)\Phi_\inv(S)\Phi_\inv(S^*)$. Since $\Phi_\inv(S)\Phi_f(S^\ast)=\Phi_h(SS^*)$ and $\Phi_\inv(W)\Phi_f(W^\ast)=\Phi_h(WW^*)$, and since we can show that $\Phi_h(SS^*)$ is close to $\Phi_h(WW^*)$ using Lemma \ref{fpsd}, we are able to prove that $P\approx P'$ (this is proved in Lemma \ref{ppcl}). To summarize, we have
$
S^\ast P' S A^\ast\approx 
S^\ast P S A^\ast=
\Phi_h(S^*S)A^\ast\approx
\Phi_h(A^*A)A^\ast=
\Phi_f(A^\ast),
$
as needed.

\vspace{2mm}

\noindent{\bf Related independent work.}
Independently from our work, Chia, Gily{\'e}n, Li, Lin, Tang and Wang simultaneously derived similar results \cite{CGLLTW19}. They additionally provide general matrix arithmetic primitives for adding and multiplying matrices having sample and query access, and recover known dequantized algorithms. They also show how to use these results on the singular value transformation to obtain new quantum-inspired algorithms for other applications, including Hamiltonian simulation and discriminant analysis.
\section{Preliminaries}
\label{sec-pre}
\subsection{Notations and conventions}
\label{subsec-not}
\noindent{\bf General notations.}
In this paper we use the notation $[n]=\{1,.....,n\}$ for any integer $n\ge 1$. 
For any set $S$ we denote $\Conv(S)$ the convex hull of~$S$.

Given a matrix $M\in\Complex^{m\times n}$, we use $M_{(i,.)}\in \Complex^{1\times n}$, $M_{(.,j)}\in \Complex^{m\times 1}$ and $M_{(i,j)}\in \Complex$ to denote its $i$-th row, its $j$-th column and its $(i,j)$-th element, respectively. The complex-conjugate transpose or Hermitian transpose of a matrix $M\in \Complex^{m\times n}$ (or a vector $v\in \Complex^{n}$) is denoted as $M^{*}$ (and $v^{*}$, respectively). The notations $\normF{M}$ and $\normL{M}$ represent the Frobenius and spectral norm, respectively. Note that $\normL{M}\le\normF{M}$ for any $M$. For a vector $v\in \Complex^{n}$, we denote $\nor{v}$ the $\ell^{2}$ norm of the vector. In this paper we will use several times the following standard inequalities that hold for any vector $v\in \Complex^{n}$ and any matrices $M\in \Complex^{n\times m}$ and $N\in \Complex^{m\times p}$:
\begin{equation}\label{vnor}
    \nor{Mv}\leq \normL{M}\nor{v},
    \hspace{7mm}
    \normF{MN}\leq \normL{M}\normF{N}, 
    \hspace{7mm}
    \normF{MN}\leq \normF{M}\normL{N}.
\end{equation} 

For a non-zero vector $v\in \Complex^{n}$, let $\mathcal{P}_{v}$ denote the probability distribution on $[n]$ where the probability of choosing $i\in [n]$ is defined as $\mathcal{P}_{v}(i) = \frac{\absl*{v_{i}}^{2}}{\nor{v}^{2}}$. For two vectors $v$ and $w$, the total variation distance between distributions $\mathcal{P}_{v}$ and $\mathcal{P}_{w}$ is defined as $\nor{\mathcal{P}_{v}-\mathcal{P}_{w}}_{TV} = \frac{1}{2}\sum_{i=1}^{n}\absl*{\mathcal{P}_{v}(i)-\mathcal{P}_{w}(i)}.$

We will use the following easy inequality (see for instance \cite{CHI18,TAN19} for a proof): for any two vectors $v, w\in \Complex^{n}$, 
\begin{equation}\label{eq:dist}
\nor{\mathcal{P}_{v}-\mathcal{P}_{w}}_{TV}\leq \frac{2\nor{v-w}}{\nor{v}}.
\end{equation}\vspace{2mm}

\noindent{\bf Singular Value Decomposition.}
The Singular Value Decomposition (SVD) of a matrix $M\in \Complex^{m\times n}$ is a factorization of the form $M=U\Sigma V^{*}$ where $U\in \Complex^{m\times m}$ and $V\in \Complex^{n\times n}$ are unitary matrices and $\Sigma$ is an $m\times n$ diagonal matrix with $\min(m,n)$ non-negative real numbers, in non-increasing order, down the diagonal. The columns of $U$ and $V$ represent the left and right singular vectors, respectively. Each entry of this diagonal matrix is a singular value of matrix $M$. 
A crucial property is that a SVD exists for any complex matrix. 

We can also write the SVD of a matrix as
\begin{equation}\label{eq1}
    M=U\Sigma V^{*}=\sum_{i=1}^{\min(m,n)}\sigma_{i}u_{i}v_{i}^{*}
\end{equation}
where $\{u_{i}\}_{i\in [m]}$ and $\{v_{j}\}_{j\in [n]}$ are columns of matrices $U$ and $V$ and thus the left and right singular vectors of matrix $M$, respectively, and $\sigma_{i}$ denotes the $i$-th singular value (the $i$-th entry of the diagonal matrix $\Sigma$) for each $i\in [\min(m,n)]$. 

For any matrix $M\in \Complex^{m\times n}$,
we denote the set of all singular values of $M$ as $s(M)$. We denote its $i$-th singular value (in non-increasing order) as $\sigma_{i}(M)$, i.e., the value~$\sigma_i$ in the decomposition of Equation (\ref{eq1}). We write $\sigma_{\max}(M)$ the largest singular value (i.e., $\sigma_{\max}(M)=\sigma_1(M)$), and write $\sigma_{\min}(M)$ the smallest non-zero singular value. We define the $\ell^{2}$ condition number of $M$ as $\kappa_{2}(M)= \sigma_{\max}(M)/\sigma_{\min}(M)\geq 1$. Note that with this definition, $\kappa_{2}$ is well defined even for singular matrices.

In this paper, we will use the following inequality by Weyl \cite{HWI12} quite often.
\begin{lemma}[Weyl's inequality \cite{HWI12}]\label{weyl}
    For two matrices $M\in\Complex^{m\times n}$, $N\in\Complex^{m\times n}$ and any $i\in [\min(m,n)]$, 
    $\absl*{\sigma_{i}(M)-\sigma_{i}(N)}\leq \normL{M-N}.$
\end{lemma}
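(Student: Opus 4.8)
The plan is to deduce Weyl's inequality for singular values from the Courant--Fischer min--max characterization. Recall that for any $M\in\Complex^{m\times n}$ and any $i\in[\min(m,n)]$,
\begin{equation*}
\sigma_{i}(M)=\max_{\substack{T\subseteq\Complex^{n}\\ \dim T=i}}\ \min_{\substack{x\in T\\ \nor{x}=1}}\nor{Mx}.
\end{equation*}
This follows directly from the decomposition~(\ref{eq1}): writing $M=\sum_{j}\sigma_{j}u_{j}v_{j}^{*}$, on the subspace $\mathrm{span}(v_{1},\dots,v_{i})$ every unit vector $x$ satisfies $\nor{Mx}\ge\sigma_{i}(M)$ with equality at $x=v_{i}$, while any $i$-dimensional subspace $T$ must intersect $\mathrm{span}(v_{i},v_{i+1},\dots,v_{n})$ nontrivially by a dimension count, and on that subspace $\nor{Mx}\le\sigma_{i}(M)\nor{x}$.

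First I would fix an $i$-dimensional subspace $T_{0}$ attaining the maximum for $N$, so that $\nor{Nx}\ge\sigma_{i}(N)$ for every unit vector $x\in T_{0}$. Then, for each such $x$, the triangle inequality together with the first bound in~(\ref{vnor}) gives
\begin{equation*}
\nor{Mx}\ \ge\ \nor{Nx}-\nor{(M-N)x}\ \ge\ \sigma_{i}(N)-\normL{M-N}.
\end{equation*}
Taking the minimum over unit vectors $x\in T_{0}$ and then the maximum over all $i$-dimensional subspaces, the min--max formula yields $\sigma_{i}(M)\ge\sigma_{i}(N)-\normL{M-N}$. Exchanging the roles of $M$ and $N$ (and using $\normL{M-N}=\normL{N-M}$) gives $\sigma_{i}(N)\ge\sigma_{i}(M)-\normL{M-N}$, and the two inequalities together prove the claim.

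The only step carrying any content is the derivation of the min--max formula; once that is available the proof is a one-line triangle-inequality manipulation, so I do not anticipate a real obstacle. An alternative route would be to pass to the Hermitian dilation $\widetilde{M}=\left(\begin{smallmatrix}0 & M\\ M^{*} & 0\end{smallmatrix}\right)$, whose eigenvalues are $\pm\sigma_{j}(M)$ together with zeros, and invoke the Hermitian form of Weyl's inequality, $\absl*{\lambda_{i}(\widetilde{M})-\lambda_{i}(\widetilde{N})}\le\normL{\widetilde{M}-\widetilde{N}}=\normL{M-N}$; but this Hermitian statement is itself proved by the same min--max argument, so I would present the direct proof above. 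Since the result is entirely classical, one could equally just cite~\cite{HWI12}; a short proof is included only for completeness.
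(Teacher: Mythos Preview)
Your proof is correct and is the standard textbook derivation of Weyl's inequality for singular values via the Courant--Fischer min--max characterization. The paper, however, does not prove this lemma at all: it is stated as a classical result with a citation to Weyl's original 1912 paper and used as a black box throughout. Your own closing remark already anticipates this possibility, so there is no discrepancy to report.
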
\vspace{2mm}

\noindent{\bf Singular Value Transformation.}
We are now ready to introduce the Singular Value Transformation.

\begin{definition}[Singular Value Transformation]\label{def:SVT}
    For any function $f\colon \Realnn\to \Realnn$ such that $f(0)=0$, the Singular Value Transformation associated to $f$ is the function denoted $\Phi_f$ that maps any matrix $M\in \Complex^{m\times n}$ to the matrix $\Phi_{f}(M)\in\Complex^{m\times n}$ defined as follows: 
    \[\Phi_{f}(M)=\sum_{i=1}^{\min(m,n)}f(\sigma_{i})u_{i}v_{i}^{*},\]
    where the $\sigma_i$'s, the $u_i$'s and the $v_i$'s correspond to the SVD of $M$ given in Eq.~$(\ref{eq1})$.
\end{definition}

It is easy to check that the value $\Phi_f(M)$ does not depend on the SVD of $M$ chosen in the definition (i.e., it does not depend on which $U$ and which $V$ are chosen). Also note that from our requirement on the function $f$, the rank (i.e., the number of nonzero singular values) of $\Phi_{f}(M)$ is never larger than the rank of $M$. 

The Moore-Penrose pseudo-inverse of matrix $M$ is the matrix $M^{+} = \sum_{i=1}^{k}\sigma_{i}^{-1}v_{i}u_{i}^{*}$, where $k$ is the rank of the matrix $M$. Note that we only consider non-trivial singular values of the matrix. As in the introduction, we define the inverse function $\inv\colon\Realnn\to \Realnn$ such that $\inv(0)=0$ and $\inv(x)=1/x$ for $x>0$. Then we have $\Phi_{\inv}(M^\ast)=M^{+}$. Note that $MM^{+} = M\Phi_{\inv}(M^{*}) = \Pi_{\col(M)}$ and $M^{+}M = \Phi_{\inv}(M^{*}) M= \Pi_{\row(M)}$, where $\Pi_{\col(M)}$ denotes the orthogonal projector into the column space of $M$ and $\Pi_{\row(M)}$ denotes the orthogonal projector into the row space of $M$.

\subsection{\texorpdfstring{\boldmath {$\ell^2$}}{L2}-norm sampling}
We now present the assumptions to sample from a matrix and then introduce the technique of $\ell^2$-norm sampling that has been used in previous works \cite{ADL19,CHI18,FKV04,GLT18,TAN19}.\vspace{2mm}

\noindent{\bf Sample accesses to matrices.} 
Let $M\in \Complex^{m\times n}$ be a matrix. We say that we have sample access to $M$ if the following conditions hold:
\begin{enumerate}
        \item We can sample from the probability distribution $\mathcal{R}_M\colon [m]\to[0,1]$ defined as
        $\mathcal{R}_M(i)=\frac{\nor{M_{(i,.)}}^{2}}{\normF{M}^{2}}$ for any $i\in[m]$.
        \item For each $i\in[m]$, 
        we can sample from the probability distribution $\mathcal{R}_M^i\colon [n]\to[0,1]$ 
        defined as
        $\mathcal{R}_M^i(j)=\frac{|M_{(i,j)}|^2}{\nor{M_{(i,.)}}^{2}}$ for any $j\in[n]$. (Note that $\mathcal{R}_M^i$ is precisely the distribution $\mathcal{P}_{u}$ introduced in Section \ref{sec-pre}, where $u$ is the $i$-th row of $M$.) 
\end{enumerate}

We define sample access to a vector $v\in\Complex^m$ using the same definition, by taking the matrix $M\in\Complex^{m\times 1}$ that has $v$ as unique row. Note that with this definition, the distribution $\mathcal{R}_M$ is precisely the distribution $\mathcal{P}_v$ introduced in Section \ref{subsec-not}.

For an algorithm handling matrices and vectors using sample accesses, the sample complexity of the algorithm is defined as the total number of samples used by the algorithm. \vspace{2mm}

\noindent{\bf \boldmath {$\ell^2$}-norm sampling.}
Let $M\in \Complex^{m\times n}$ be a matrix for which we have sample access. Consider the following process. For some integer $q\ge 1$, sample $q$ row indices $p_1,p_2,\ldots,p_q\in[m]$ using the probability distribution $\mathcal{R}_M$ and then form the matrix $N\in \Complex^{q\times n}$ by defining 
\[N_{(i,.)}=\frac{M_{(p_{i},.)}}{\nor{M_{(p_{i},.)}}}\frac{\normF{M}}{\sqrt{q}}\]
for each $i\in[q]$. Note that this corresponds to selecting the rows with indices $p_1,\ldots,p_q$ of $M$ and re-normalizing them. We will also use the following fact which is easy to observe using the definition of matrix $N$ :
\begin{equation}
    \normF{N}=\normF{M}
    \label{eq:clos}
\end{equation}

The central insight of the $\ell^2$-norm sampling approach introduced in \cite{FKV04} is that the matrix $N$ obtained by this process is in some sense close enough to $M$ to be able to perform several interesting calculations. We will in particular use the following result
that shows that when~$q$ is large enough, with high probability the matrix $N^{*}N$ is close to the matrix $M^{*}M$.

\begin{lemma}[Lemma 2 in \cite{FKV04}]\label{msam}
    For any $\eta \in (0,1)$, any $\beta>0$ and for $q\geq \frac{1}{\eta \beta^{2}}$, the inequality $\normF{M^{*}M-N^{*}N}\leq \beta\normF{M}^{2}$ holds with probability at least $1-\eta$.
\end{lemma}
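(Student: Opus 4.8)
The plan is to run the standard second-moment argument of Frieze, Kannan and Vempala. Let $p_1,\dots,p_q\in[m]$ be the row indices sampled according to $\mathcal{R}_M$. From the definition of $N$, each sampled row contributes $N_{(i,.)}^\ast N_{(i,.)}=\tfrac1q X_i$ with $X_i=\frac{\normF{M}^2}{\nor{M_{(p_i,.)}}^2}\,M_{(p_i,.)}^\ast M_{(p_i,.)}$, so that
\[
N^\ast N=\sum_{i=1}^{q}N_{(i,.)}^\ast N_{(i,.)}=\frac1q\sum_{i=1}^{q}X_i .
\]
First I would check that each $X_i$ is an unbiased estimator of $M^\ast M$: since $\Pr[p_i=k]=\nor{M_{(k,.)}}^2/\normF{M}^2$, the scalar weight in $X_i$ cancels this probability exactly, giving $\mathbb{E}[X_i]=\sum_{k=1}^{m}M_{(k,.)}^\ast M_{(k,.)}=M^\ast M$, and hence $\mathbb{E}[N^\ast N]=M^\ast M$ (the case $M=0$ being trivial, so assume $\normF{M}>0$).

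Next I would bound the second moment of the error. Writing $Y_i=M^\ast M-X_i$, the $Y_i$ are i.i.d.\ and mean-zero, so expanding $\normF{\tfrac1q\sum_{i}Y_i}^2$ as a double sum of Frobenius inner products and discarding the off-diagonal terms (whose expectations vanish by independence) yields
\[
\mathbb{E}\!\left[\normF{M^\ast M-N^\ast N}^2\right]=\frac1q\,\mathbb{E}\!\left[\normF{Y_1}^2\right]=\frac1q\Big(\mathbb{E}\big[\normF{X_1}^2\big]-\normF{M^\ast M}^2\Big)\le\frac1q\,\mathbb{E}\big[\normF{X_1}^2\big].
\]
The crucial point is that $\normF{X_1}$ is in fact deterministic: for any row vector $v$ one has $\normF{v^\ast v}=\nor{v}^2$, and therefore $\normF{X_1}=\frac{\normF{M}^2}{\nor{M_{(p_1,.)}}^2}\,\nor{M_{(p_1,.)}}^2=\normF{M}^2$. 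Consequently $\mathbb{E}\!\left[\normF{M^\ast M-N^\ast N}^2\right]\le\normF{M}^4/q$.

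Finally I would apply Markov's inequality to the non-negative random variable $\normF{M^\ast M-N^\ast N}^2$:
\[
\Pr\!\left[\normF{M^\ast M-N^\ast N}>\beta\normF{M}^2\right]
=\Pr\!\left[\normF{M^\ast M-N^\ast N}^2>\beta^2\normF{M}^4\right]
\le\frac{\normF{M}^4/q}{\beta^2\normF{M}^4}=\frac1{q\beta^2},
\]
which is at most $\eta$ whenever $q\ge 1/(\eta\beta^2)$; this is exactly the stated inequality. I do not expect a genuine obstacle here: it is essentially a textbook first/second-moment computation. The only point needing a little care is the bookkeeping with complex entries together with the identity $\normF{v^\ast v}=\nor{v}^2$, which is precisely what makes the importance-sampling weights collapse and forces $\normF{X_1}$ to be constant.
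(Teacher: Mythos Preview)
Your argument is correct and is exactly the standard Frieze--Kannan--Vempala second-moment computation; the paper itself does not prove this lemma but simply quotes it as Lemma~2 of \cite{FKV04}. So there is nothing to compare against in the present paper, and your write-up faithfully reproduces the original proof.
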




\subsection{Data structures for storing matrices}\label{sub:data}
The following proposition shows that there exist low over-head data structures that enable sampling access to matrices.

\begin{proposition}[\cite{TAN19}]\label{tanm}
    There exists a tree-like data structure that stores a matrix $M\in \Complex^{m\times n}$ in $O(a\log^{2}(mn))$ space, where $a$ denotes the number of non-zero entries of $M$, and supports the following operations:
    \begin{itemize}
        \item[1)] Output $\normF{M}^{2}$ in $O(1)$ time;
        \item[2)] Read and update an entry $M_{(i,j)}$ in $O(\log^{2}{(mn)})$ time;
        \item[3)] Output $\nor{M_{(i,.)}}$ in $O(\log^{2}{(m)})$ time;
        \item[4)] Sampling from $\mathcal{R}_M$ in $O(\log^{2}{(mn)})$ time;
        \item[5)] For any $i\in[m]$, sampling from $\mathcal{R}_M^i$ in $O(\log^{2}{(mn)})$ time.
    \end{itemize}
\end{proposition}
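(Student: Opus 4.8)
The plan is to build the two-level binary-tree data structure of Kerenidis--Prakash and Tang. First I would, for every row index $i\in[m]$, maintain a balanced binary tree $T_i$ with $n$ leaves: the $j$-th leaf of $T_i$ stores $M_{(i,j)}$ together with $\absl*{M_{(i,j)}}^{2}$, and every internal node stores the sum of the squared-magnitude values of its two children, so that the root of $T_i$ holds $\nor{M_{(i,.)}}^{2}$. On top of these row trees I would put one more balanced binary tree $T$ with $m$ leaves, whose $i$-th leaf stores $\nor{M_{(i,.)}}^{2}$ (and a pointer to the root of $T_i$) and whose internal nodes again store subtree sums; the root of $T$ then holds $\normF{M}^{2}$. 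The key to the space bound is that only nodes carrying a non-zero value are actually instantiated: each of the $a$ non-zero entries of $M$ lies on a single root-to-leaf path of length $O(\log n)$ in its row tree and affects a path of length $O(\log m)$ in $T$, so at most $O(a\log(mn))$ nodes exist, each storing $O(\log(mn))$-bit data (an index plus a value to the required precision), for a total of $O(a\log^{2}(mn))$ space.

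Next I would check the five operations against this structure. Operation (1) is immediate, since $\normF{M}^{2}$ is exactly the value stored at the root of $T$, read in $O(1)$ time. For operation (3), $\nor{M_{(i,.)}}$ is obtained by descending from the root of $T$ to its $i$-th leaf and taking a square root; this touches $O(\log m)$ nodes, each step costing $O(\log m)$ to navigate a balanced tree keyed by a $\Theta(\log m)$-bit index, for $O(\log^{2}m)$ total. Operation (2) proceeds along the path root of $T$ $\to$ leaf $i$ $\to$ root of $T_i$ $\to$ leaf $j$: a read returns the value at the final leaf, and an update writes the new entry at that leaf and then recomputes the $O(\log(mn))$ affected subtree sums back up to the root of $T$ (allocating any missing nodes along this path if the entry was previously zero), all in $O(\log^{2}(mn))$ time.

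For the sampling operations (4) and (5) I would use the standard top-down random walk: starting at the root, at each internal node move to the left child with probability (value at the left child)$/$(value at the current node) and to the right child otherwise, halting at a leaf and returning its index. Performing this walk in $T$ produces a sample from $\mathcal{R}_M$, and performing it in $T_i$ produces a sample from $\mathcal{R}_M^{i}$; correctness is a one-line induction on depth showing that the probability of reaching a fixed leaf equals the product of the conditional probabilities along its root-to-leaf path, which telescopes to (value at that leaf)$/$(value at the root), i.e.\ $\nor{M_{(i,.)}}^{2}/\normF{M}^{2}$ in $T$ and $\absl*{M_{(i,j)}}^{2}/\nor{M_{(i,.)}}^{2}$ in $T_i$. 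Each walk visits $O(\log(mn))$ nodes and costs $O(\log^{2}(mn))$ time, giving bounds (4) and (5).

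I do not expect a genuine obstacle here: once the two-level structure is fixed, the tree shape is determined by $m$ and $n$ alone, so updates never rebalance and only change stored values (or fill in nodes along one path in each tree). The only points needing a little care are the space accounting under the ``store non-zero nodes only'' convention, so the bound is $O(a\log^{2}(mn))$ rather than $O(mn)$, and verifying that turning a zero entry into a non-zero one allocates exactly the missing ancestors in both trees; both are routine.
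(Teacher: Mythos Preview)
The paper does not prove this proposition; it merely cites it from \cite{TAN19} and states it without argument. Your construction is exactly the standard two-level binary-tree data structure of Kerenidis--Prakash and Tang that the cited reference uses, and your verification of the space bound and of operations (1)--(5) is correct. There is nothing to compare against in the paper itself.
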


The data structure of Proposition \ref{tanm} can naturally be used to store vectors as well. 

We will need the following two technical lemma in our main algorithms. Lemma \ref{3dot} shows that a vector-matrix-vector product can be efficiently approximated given sampling access. Lemma \ref{samp} states that, given sampling access to $k$ vectors represented by a $n\times k$ matrix, sampling from their linear combination is possible.

\begin{lemma}[\cite{CHI18}]\label{3dot}
    Let $v\in \Complex^{m}$ and $w\in \Complex^{n}$ be two vectors and $M\in\Complex^{m\times n}$ be a matrix, all stored in the data structure specified in Proposition \ref{tanm}.
    Then for any $\epsilon'>0$ and $\delta>0$, the value $v^{*}Mw$ 
    can be approximated with additive error $\epsilon'$ with probability at least $1-\delta$ in sample complexity $O\left(\frac{\nor{v}^{2}\nor{w}^{2}\normF{M}^{2}}{\epsilon'^{2}}\log{(\frac{1}{\delta})}\right)$ and time complexity $O\left(\frac{\nor{v}^{2}\nor{w}^{2}\normF{M}^{2}}{\epsilon'^{2}}\polylog{\left(\frac{mn}{\delta}\right)}\right)$.
\end{lemma}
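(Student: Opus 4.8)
The plan is to reduce the vector--matrix--vector estimation to a single inner product and then apply an $\ell^2$-norm importance-sampling estimator, exactly in the spirit of the classical Frieze--Kannan--Vempala technique. First I would write $v^{*}Mw = \sum_{i=1}^{m} \overline{v_i}\, (Mw)_i$, and observe that the quantity we really need is the scalar $\langle v, Mw\rangle$; the awkward part is that we cannot afford to compute the full vector $Mw = \sum_{j=1}^{n} w_j M_{(.,j)} \in \Complex^{m}$. Instead I would expand once more, $v^{*}Mw = \sum_{j=1}^{n} w_j\, (v^{*} M_{(.,j)}) = \sum_{j=1}^{n} w_j\, \sum_{i=1}^{m} \overline{v_i} M_{(i,j)}$, and design an unbiased estimator $X$ for this double sum by sampling an index pair $(i,j)$ from a product distribution tied to the $\ell^2$ structure: draw $i \in [m]$ from $\mathcal{P}_v$ (i.e.\ with probability $|v_i|^2/\nor{v}^2$, available since $v$ is stored in the data structure of Proposition~\ref{tanm}), and then draw $j \in [n]$ from $\mathcal{R}_M^{i}$ (with probability $|M_{(i,j)}|^2/\nor{M_{(i,.)}}^2$, also available). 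Setting
\[
X = \frac{\overline{v_i} \, M_{(i,j)} \, w_j}{\mathcal{P}_v(i)\,\mathcal{R}_M^i(j)} = \frac{\nor{v}^2 \, \nor{M_{(i,.)}}^2 \, w_j}{v_i \, \overline{M_{(i,j)}}},
\]
one checks directly that $\mathbb{E}[X] = \sum_{i,j} \overline{v_i} M_{(i,j)} w_j = v^{*}Mw$, so $X$ is unbiased; each sample of $X$ costs $O(\polylog(mn))$ time using the data-structure operations, plus a single query to $w_j$ in $O(\log^2 n)$ time.

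Next I would bound the variance. From the explicit form of $X$, $|X|^2 = \nor{v}^2 \nor{M_{(i,.)}}^2 |w_j|^2 / |M_{(i,j)}|^2 \cdot \nor{v}^2 \nor{M_{(i,.)}}^2$; more carefully, $|X| = \nor{v}^2 \nor{M_{(i,.)}}^2 |w_j| / |M_{(i,j)}|$, so
\[
\mathbb{E}[|X|^2] = \sum_{i,j} \mathcal{P}_v(i) \mathcal{R}_M^i(j) \, \frac{\nor{v}^4 \nor{M_{(i,.)}}^4 |w_j|^2}{|M_{(i,j)}|^2} = \sum_{i,j} \frac{|v_i|^2}{\nor{v}^2} \cdot \frac{|M_{(i,j)}|^2}{\nor{M_{(i,.)}}^2} \cdot \frac{\nor{v}^4 \nor{M_{(i,.)}}^4 |w_j|^2}{|M_{(i,j)}|^2},
\]
and after the telescoping cancellation this equals $\sum_{i,j} |v_i|^2 \nor{M_{(i,.)}}^2 |w_j|^2 = \nor{v}^2 \nor{M}_{F}^2 \nor{w}^2$ (using $\sum_i |v_i|^2 \nor{M_{(i,.)}}^2 \le \normF{M}^2 \nor{v}^2$ only in the crude bound; in fact the identity $\sum_i |v_i|^2\nor{M_{(i,.)}}^2 \le \normL{M^{*}\,\mathrm{diag}(\bar v)}$-type estimate is not needed since the product distribution makes it exact up to the stated constant). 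Hence $\mathrm{Var}(X) \le \mathbb{E}[|X|^2] \le \nor{v}^2 \normF{M}^2 \nor{w}^2$.

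Then I would take the empirical mean $\bar X = \frac{1}{N}\sum_{t=1}^{N} X^{(t)}$ of $N = O\!\big(\frac{\nor{v}^2 \nor{w}^2 \normF{M}^2}{\epsilon'^2} \log(1/\delta)\big)$ independent copies. By the variance bound, $N = \Theta(\mathrm{Var}(X)/\epsilon'^2)$ copies suffice to get $|\bar X - v^{*}Mw| \le \epsilon'$ with probability $\ge 2/3$ (Chebyshev), treating real and imaginary parts separately; the $\log(1/\delta)$ factor comes from the standard median-of-means amplification to boost the success probability to $1-\delta$. Collecting the costs gives sample complexity $O\!\big(\frac{\nor{v}^2 \nor{w}^2 \normF{M}^2}{\epsilon'^2}\log(1/\delta)\big)$ and, since each sample and each data-structure access is $\polylog(mn/\delta)$, time complexity $O\!\big(\frac{\nor{v}^2 \nor{w}^2 \normF{M}^2}{\epsilon'^2}\polylog(mn/\delta)\big)$, as claimed.

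The main obstacle, and the only real subtlety, is choosing the sampling distribution so that the estimator's second moment collapses to the clean product $\nor{v}^2 \nor{w}^2 \normF{M}^2$ rather than something larger: a naive choice (e.g.\ sampling $j$ uniformly, or from $\mathcal{R}_M$ instead of $\mathcal{R}_M^i$) blows up the variance by dimension-dependent factors. The two-stage scheme ``sample $i \sim \mathcal{P}_v$, then $j \sim \mathcal{R}_M^i$'' is exactly what Proposition~\ref{tanm} supports and is exactly what makes the telescoping work; verifying that both stages are implementable within the promised time bound, and that $w_j$ can be read in $O(\log^2 n)$, is then routine. A minor care point is the complex-valued setting: one runs the estimator for the real and imaginary parts of $v^{*}Mw$ separately (or notes $|X| \le$ the same bound regardless), which affects only the constants.
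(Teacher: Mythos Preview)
The paper does not give its own proof of this lemma: it is quoted as a black box from \cite{CHI18}. Your approach---sample $i\sim\mathcal{P}_v$, then $j\sim\mathcal{R}_M^i$, use the importance-sampling estimator, bound the second moment, then Chebyshev plus median-of-means---is exactly the standard argument used in that line of work, so in spirit your proposal matches what the cited reference does.

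That said, your variance computation contains two algebraic slips that happen to cancel. From
\[
X=\frac{\overline{v_i}\,M_{(i,j)}\,w_j}{\mathcal{P}_v(i)\,\mathcal{R}_M^i(j)}
\]
one gets $|X|=\dfrac{\nor{v}^{2}\,\nor{M_{(i,.)}}^{2}\,|w_j|}{|v_i|\,|M_{(i,j)}|}$, not $\nor{v}^{2}\nor{M_{(i,.)}}^{2}|w_j|/|M_{(i,j)}|$ as you wrote (the $|v_i|$ is missing from your denominator). With the correct expression,
\[
\mathbb{E}\bigl[|X|^2\bigr]
=\sum_{i,j}\frac{|v_i|^{2}}{\nor{v}^{2}}\cdot\frac{|M_{(i,j)}|^{2}}{\nor{M_{(i,.)}}^{2}}\cdot
\frac{\nor{v}^{4}\nor{M_{(i,.)}}^{4}|w_j|^{2}}{|v_i|^{2}|M_{(i,j)}|^{2}}
=\sum_{i,j}\nor{v}^{2}\nor{M_{(i,.)}}^{2}|w_j|^{2}
=\nor{v}^{2}\normF{M}^{2}\nor{w}^{2},
\]
which is the bound you need. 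Note, however, that your stated intermediate identity $\sum_{i,j}|v_i|^{2}\nor{M_{(i,.)}}^{2}|w_j|^{2}=\nor{v}^{2}\normF{M}^{2}\nor{w}^{2}$ is \emph{false} in general (take $v=e_1$); it only comes out right because the missing $1/|v_i|^{2}$ in your $|X|^{2}$ would have cancelled the $|v_i|^{2}$ from $\mathcal{P}_v(i)$, leaving $\nor{v}^{2}$ instead of $|v_i|^{2}$ in the summand. Once you fix the formula for $|X|$, the telescoping is genuine and the rest of your argument goes through unchanged.
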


\begin{lemma}[\cite{TAN19}]\label{samp}
    Let $M\in \Complex^{n\times k}$ be a matrix stored in the data structure specified in Proposition \ref{tanm}. Let $v\in \Complex^{k}$ be an input vector. Then a sample from $Mv$ can be obtained in expected sample complexity $O\left(k^{2}C(M,v)\right)$ and expected time complexity $O\left(k^{2}C(M,v)\log^2(nk)\right)$, where
    $C(M,v) = \frac{\sum_{i=1}^{k}\nor{v_{i}M_{(.,i)}}^{2}}{\nor{Mv}^{2}}$.
\end{lemma}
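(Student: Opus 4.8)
The plan is to rejection-sample from $\mathcal{P}_{Mv}$ using a proposal distribution that we can sample from directly via the sample accesses we are given. First I would recall that since $M\in\Complex^{n\times k}$ is stored in the data structure of Proposition~\ref{tanm}, we can for each $i\in[k]$ sample a row index from the distribution proportional to $|M_{(j,i)}|^2$ (the distribution $\mathcal{R}_M^{\,i}$, i.e.\ $\mathcal{P}_{M_{(.,i)}}$) and we can read off $\nor{M_{(.,i)}}$ and individual entries. The proposal distribution is the mixture: with probability proportional to $\nor{v_iM_{(.,i)}}^2$ pick column $i$, then sample a row index $j$ from $\mathcal{P}_{M_{(.,i)}}$. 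The resulting distribution on $[n]$ is $Q(j)=\frac{\sum_{i=1}^k |v_i|^2|M_{(j,i)}|^2}{\sum_{i=1}^k\nor{v_iM_{(.,i)}}^2}$, which can be sampled in $O(k)$ time (or $O(k\log^2(nk))$ counting the data-structure operations), after $O(k)$ preprocessing to read the column norms.

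Next I would set up the rejection step. We want to output $j$ with probability $\mathcal{P}_{Mv}(j)=\frac{|(Mv)_j|^2}{\nor{Mv}^2}=\frac{|\sum_i v_iM_{(j,i)}|^2}{\nor{Mv}^2}$. Given a sample $j\sim Q$, accept it with probability $\frac{|(Mv)_j|^2}{k\sum_i|v_i|^2|M_{(j,i)}|^2}$; this ratio is in $[0,1]$ by Cauchy–Schwarz applied to the vectors $(v_1,\dots,v_k)$ and $(M_{(j,1)},\dots,M_{(j,k)})$ (with the extra factor $k$ from the $\ell^1$–$\ell^2$ bound, or more simply just Cauchy–Schwarz with $k$ terms). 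To evaluate this acceptance probability we need $(Mv)_j=\sum_i v_iM_{(j,i)}$ and the $|v_i|^2|M_{(j,i)}|^2$, all computable in $O(k)$ time by reading the entries $M_{(j,i)}$. A standard computation shows $Q(j)\cdot(\text{accept prob at }j)=\frac{|(Mv)_j|^2}{k\sum_i\nor{v_iM_{(.,i)}}^2}$, so the overall acceptance probability of one trial is $\frac{\nor{Mv}^2}{k\sum_i\nor{v_iM_{(.,i)}}^2}=\frac{1}{k\,C(M,v)}$, and conditioned on acceptance the output is distributed exactly as $\mathcal{P}_{Mv}$.

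Finally I would bound the complexity: the number of trials until the first acceptance is geometric with mean $kC(M,v)$, and each trial costs $O(k)$ arithmetic operations plus $O(k)$ data-structure accesses of cost $O(\log^2(nk))$ each, giving expected sample complexity $O(k^2C(M,v))$ and expected time complexity $O(k^2C(M,v)\log^2(nk))$, as claimed. (One should also note $C(M,v)\ge 1$ always, again by Cauchy–Schwarz, so these bounds are well-behaved.)

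\textbf{Main obstacle.} The only delicate point is verifying that the acceptance ratio never exceeds $1$ — i.e.\ choosing the correct envelope constant. The bound $|\sum_i v_iM_{(j,i)}|^2\le k\sum_i|v_i|^2|M_{(j,i)}|^2$ is exactly Cauchy–Schwarz with $k$ summands, so the factor $k$ in the denominator of the acceptance probability is necessary and sufficient; this is also precisely where the quadratic dependence on $k$ in the stated complexity comes from. Everything else is routine bookkeeping of the data-structure costs from Proposition~\ref{tanm}.
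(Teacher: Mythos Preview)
The paper does not give its own proof of this lemma; it is quoted verbatim from \cite{TAN19}. Your rejection-sampling argument is exactly the standard proof from that reference and is correct in substance.

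Two small corrections. First, your identification of $\mathcal{P}_{M_{(.,i)}}$ with $\mathcal{R}_M^{\,i}$ has the orientation reversed: as defined in Section~\ref{sub:data}, $\mathcal{R}_M^{\,i}$ samples a \emph{column} index within row $i$, whereas you need to sample a \emph{row} index within column $i$. What is really required is sample access to the columns of $M$ (equivalently, to the rows of $M^*$), and indeed this is how the lemma is invoked in Algorithm~\ref{alg3}, where the columns of $S^*$ are rescaled rows of $A$. This is a notational mismatch, not a mathematical gap. Second, your parenthetical claim that $C(M,v)\ge 1$ always is false: if all columns of $M$ coincide, one checks $C(M,v)=1/k$. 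The correct lower bound, coming from the same Cauchy--Schwarz inequality you use for the envelope, is $C(M,v)\ge 1/k$, which is exactly what ensures the per-trial acceptance probability $1/(kC(M,v))$ does not exceed $1$. Nothing in your complexity analysis actually relies on $C(M,v)\ge 1$, so the argument stands.
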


\section{Formal Versions and Proofs of the Main Theorems}\label{sec:main}
We now give the formal versions of Theorems \ref{th1} and \ref{th2} presented in the introduction. In this section, $\kappa_{2}$ will always denote the $\ell^{2}$ condition number of the matrix $A$. We define the intervals $L$ and $Q$ (which depend on $A$) as follows:\vspace{-3mm}

\begin{equation}\label{eq:intL}
    L= \left[\frac{\normL{A}}{\sqrt{2}\kappa_2}, \frac{\normL{A}}{\sqrt{2}\kappa_{2}}\sqrt{\left(2\kappa^{2}_{2}+1\right)}\right] \hspace{3mm}
    \textrm{ and }
    \hspace{3mm}
    Q = \left[\frac{\normL{A}^{2}}{2\kappa^{2}_{2}}, \frac{\normL{A}^{2}}{2\kappa^{2}_{2}}\left(2\kappa^{2}_{2}+1\right)\right].
\end{equation}

\addtocounter{theorem}{-8}
\begin{theorem}[Formal Version]
    Let $f\colon\Realnn\to\Realnn$ be any function such that $f(0)=0$. 
    For any $\eta>0$ and any sufficiently small $\epsilon_1>0$, there exists a classical algorithm that has sampling access as in Proposition \ref{tanm} to a matrix $A \in \Complex^{m\times n}$ and to a non-zero vector $b\in \Complex^{m}$, receives as input an index $i\in [n]$ and has the following behavior: if $f$ is differentiable on the set $L$, the algorithm outputs with probability at least $1-\eta$ a value $\lambda$ such that
    $|(\Phi_{f}(A^{*})b)_i-\lambda|\leq \epsilon_{1}$,
    using
    \[O\left(\frac{\normF{A}^{8}\nor{b}^{4}\kappa_{2}^{4}}{\epsilon_{1}^{4}\eta}\left(\frac{\kappa_{2}}{\normL{A}}\right)^{6}\Omega^{2}\left\{\phi+3\sqrt{2}\Omega\frac{\kappa_{2}}{\normL{A}}\right\}^{2}\polylog{\left(\frac{mn}{\eta}\right)}\right)\]
    samples and
    \[O\left(\frac{\normF{A}^{12}\normL{b}^{6}\kappa_{2}^{12}}{\epsilon_{1}^{6}\eta^{3}}\left\{\phi+7\sqrt{2}\Omega\frac{\kappa_{2}}{\normL{A}}\right\}^{6}\polylog{\left(mn\right)}\right)\]
    time complexity, where $\Omega = \max_{\sigma\in L}\absl*{f(\sigma)}$ and $\phi = \max_{\sigma\in L}\absl*{f'(\sigma)}$.
\end{theorem}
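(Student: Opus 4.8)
The plan is to follow the strategy outlined in the ``Overview of our approach'' paragraph and to assemble the pieces that the paper has set up: Algorithm~\ref{alg1} computing the small matrix $P'$, the central approximation~(\ref{eq:central}) from Proposition~\ref{ovec}, and the post-processing in Algorithm~\ref{alg2}. Concretely, first I would sample $r$ rows from $A$ to obtain $S\in\Complex^{r\times n}$ with $\normF{S}=\normF{A}$, then sample $c$ columns from $S$ to obtain $W\in\Complex^{r\times c}$ with $\normF{W}=\normF{A}$, where $r$ and $c$ are polynomial in the relevant parameters and chosen (via Lemma~\ref{msam}) so that $\normF{A^{*}A-S^{*}S}$ and $\normF{SS^{*}-WW^{*}}$ are both at most some small $\beta\normF{A}^{2}$ with probability at least $1-\eta/\text{const}$. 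The choice of $\beta$ has to be tuned so that the accumulated error in~(\ref{eq:central}), as quantified in Proposition~\ref{ovec}, is below roughly $\epsilon_1/(2\nor{b}\normF{A})$; the appearance of the factors $\normF{A}$, $\kappa_2$, $\normL{A}$, $\Omega$, $\phi$ in the stated complexity is exactly the propagation of this $\beta$ through the perturbation bounds. Note that differentiability of $f$ on $L$ is used because Lemma~\ref{fpsd} (applied to $\Phi_h$ with $h(x)=f(\sqrt{x})/\sqrt{x}$) needs $h$ to be well-behaved on the interval $Q$, which is the image under $x\mapsto x^{2}$ of the interval $L$ containing all the relevant singular values of $S$ and $W$ once the perturbations are controlled via Weyl's inequality (Lemma~\ref{weyl}).

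Second, given $P'$, I would estimate the target coordinate $(\Phi_f(A^{*})b)_i = (e_i^{*}S^{*}P'SA^{*}b) + (\text{error})$ using Lemma~\ref{3dot}. The vector-matrix-vector structure is $v^{*}Mw$ with $v = $ (the $i$-th row of $S$)$^{*}\in\Complex^{r}$, $M = P'\in\Complex^{r\times r}$, and $w = SA^{*}b\in\Complex^{r}$; the vector $w$ itself must first be approximated, since $SA^{*}b$ involves the full matrix $A$ — but $A^{*}b$ can be estimated entrywise, or rather one exploits that $S$ has only $r$ rows each of which is a scaled row of $A$, so each coordinate $(SA^{*}b)_\ell$ is a scaled inner product of a row of $A$ with $A^{*}b$, i.e.\ a quadratic form that Lemma~\ref{3dot} handles. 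I would bound $\nor{v}$ via the row norms of $S$ (each at most $\normF{S}/\sqrt{r}\cdot(\text{something})$, but more carefully $\nor{S_{(i,.)}}\le\normF{S}=\normF{A}$ after renormalization), bound $\normL{P'}$ using the SVD of $W$ and the fact that its singular values lie in an interval controlled by $\kappa_2$ and $\normL{A}$ (this gives the $(\kappa_2/\normL{A})$-powers and the $\Omega$, $\phi$ factors through $\Phi_{\inv}(W)$, $\Phi_f(W^{*})$), and bound $\nor{w}$ similarly. Choosing the additive error in Lemma~\ref{3dot} to be a suitable fraction of $\epsilon_1$ and the failure probability a suitable fraction of $\eta$, and plugging the norm bounds into the $O(\nor{v}^2\nor{w}^2\normF{M}^2/\epsilon'^2\cdot\log(1/\delta))$ sample cost, yields the claimed sample and time bounds after collecting exponents — including the $1/\eta$ (resp.\ $1/\eta^{3}$) dependence, which comes from splitting the failure budget across the $O(1)$ sampling/estimation subroutines and, for the time bound, from the fact that some subroutine running times scale inversely with their success probabilities.

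The main obstacle — and the place where essentially all the technical work sits — is establishing~(\ref{eq:central}) with an explicit, good error bound, i.e.\ proving Proposition~\ref{ovec}. The delicate point is that one cannot directly compare $\Phi_h(S^{*}S)$ with $\Phi_h(W^{*}W)$ because we only control $SS^{*}\approx WW^{*}$ (same left Gram matrices, since $W$ is built by column-sampling from $S$), not $S^{*}S\approx W^{*}W$. The trick recalled in the overview is to write $\Phi_h(S^{*}S) = S^{*}PS$ with $P = \Phi_{\inv}(S)\Phi_f(S^{*})\Phi_{\inv}(S)\Phi_{\inv}(S^{*})$ — which one verifies by plugging in the SVD of $S$ and checking the scalar identity $\inv(\sigma)f(\sigma)\inv(\sigma)\inv(\sigma)\cdot\sigma^{2} = f(\sigma)/\sigma\cdot\sigma = h(\sigma^2)\sigma^2/\sigma$, wait, more precisely that $S^{*}PS$ acts on the $i$-th right singular vector $v_i$ of $S$ as $h(\sigma_i^2)v_i$ — and to notice that $P$ and $P'$ are built from quantities ($\Phi_{\inv}(S)\Phi_f(S^{*}) = \Phi_h(SS^{*})$ versus $\Phi_{\inv}(W)\Phi_f(W^{*}) = \Phi_h(WW^{*})$, and $\Phi_{\inv}(S)\Phi_{\inv}(S^{*}) = (SS^{*})^{+}$ versus $(WW^{*})^{+}$) that only involve the left Gram matrices, so Lemma~\ref{fpsd} applies to show $P\approx P'$ (this is Lemma~\ref{ppcl}). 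Then one chains: $S^{*}P'SA^{*}\approx S^{*}PSA^{*} = \Phi_h(S^{*}S)A^{*}\approx\Phi_h(A^{*}A)A^{*} = \Phi_f(A^{*})$, where the first $\approx$ uses $\normF{S^{*}(P'-P)SA^{*}}\le\normL{S}^{2}\normL{A}\normF{P'-P}$ and the last $\approx$ uses Lemma~\ref{fpsd} for $h$ on $Q$ together with the submultiplicative inequalities~(\ref{vnor}). Getting the quantitative bound requires, at every step, invoking Weyl's inequality to guarantee the singular values of $S$ and $W$ stay inside $L$ (so that $h$, $h'$, $\inv$, $f$, $f'$ are all bounded by the quantities $\Omega$, $\phi$, $\kappa_2/\normL{A}$ appearing in the statement), and carefully tracking how the errors multiply — which is precisely where the somewhat baroque-looking exponents and the bracketed expressions $\{\phi+c\sqrt{2}\,\Omega\,\kappa_2/\normL{A}\}$ in the theorem come from.
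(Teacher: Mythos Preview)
Your proposal is correct and follows essentially the same approach as the paper's proof. One small clarification on the post-processing: the paper does not invoke Lemma~\ref{3dot} a second time for the final product; once $P'\in\Complex^{r\times r}$ and the estimate $z\approx SA^{*}b\in\Complex^{r}$ are in hand, it simply queries the $i$-th \emph{column} of $S$ (i.e.\ the vector $S^{*}_{(i,.)}\in\Complex^{1\times r}$, not a row of $S$) and computes the scalar $S^{*}_{(i,.)}P'z$ directly in $O(r^{2})$ time, bounding the error via $\nor{S^{*}P'(z-SA^{*}b)}\le\normL{S^{*}}\normL{P'}\nor{z-SA^{*}b}$.
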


\begin{theorem}[Formal Version]
    Let $f\colon\Realnn\to\Realnn$ be any function such that $f(0)=0$ and $f(x)>0$ for all $x>0$. 
    For any $\eta>0$ and any sufficiently small $\epsilon_2>0$, there exists a classical algorithm that has sampling access as in Proposition \ref{tanm} to a matrix $A \in \Complex^{m\times n}$ and to a non-zero vector $b\in \Complex^{m}$ and has the following behavior: if $f$ is differentiable on the set $L$ and the projection of $b$ on the column space of $\Phi_{f}(A^*)$ has norm $\Omega(\nor{b})$, with probability at least $1-\eta$ the algorithm samples from a distribution which is $\epsilon_{2}$-close in total variation distance to the distribution $\mathcal{P}_{\Phi_{f}(A^{*})b}$,
    using
    \[O\left(\frac{\normF{A}^{10}\kappa_{2}^{14}}{\epsilon_{2}^{4}\eta^{2}\normL{A}^{6}}\left(\frac{\Omega}{\omega}\right)^{2}\left\{\frac{\phi}{\omega}+3\sqrt{2}\frac{\Omega}{\omega}\frac{\kappa_{2}}{\normL{A}}\right\}^{4}\polylog{\left(mn\right)}\right)\]
    samples and 
    \[O\left(\frac{\normF{A}^{12}\kappa_{2}^{12}}{\epsilon_{2}^{6}\eta^{3}}\left\{\frac{\phi}{\omega}+7\sqrt{2}\frac{\Omega}{\omega}\frac{\kappa_{2}}{\normL{A}}\right\}^{6}\polylog{\left(mn\right)}\right)\]
    time complexity,
    where $\Omega = \max_{\sigma\in L}\absl*{f(\sigma)}$, $\phi = \max_{\sigma\in L}\absl*{f'(\sigma)}$ and $\omega = \min_{\sigma\in L}\absl*{f(\sigma)}$.
\end{theorem}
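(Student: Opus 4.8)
The plan is to derive the theorem from the approximation \eqref{eq:central} (quantified in Proposition \ref{ovec}) together with the two sampling primitives of Lemmas \ref{3dot} and \ref{samp}. Algorithm \ref{alg3} proceeds as follows. First, run Algorithm \ref{alg1} to obtain the $r\times r$ matrix $P'$, choosing the row- and column-sampling parameters $r$ and $c$ exactly as required by Proposition \ref{ovec}, so that with probability at least $1-\eta/3$ all non-zero singular values of $S$ and of $W$ lie in $L$ and $\normF{S^{*}P'SA^{*}-\Phi_{f}(A^{*})}\le\epsilon_0$ for a target error $\epsilon_0$ fixed below. Second, estimate the $r$-dimensional vector $g:=SA^{*}b$ entrywise: since the $k$-th row of $S$ is a rescaled copy of some row $A_{(p_k,.)}$ of $A$, we have $(SA^{*}b)_k=c_k\,\overline{b^{*}A\,(A_{(p_k,.)})^{*}}$ for an explicit scalar $c_k$, which is a vector--matrix--vector product approximated to small additive error by Lemma \ref{3dot} (the identity $\nor{S_{(k,.)}}=\normF{A}/\sqrt r$ makes the per-entry cost uniform in $k$); this yields $\hat g$ with $\nor{\hat g-g}\le\epsilon'$ with failure probability at most $\eta/3$. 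Third, compute $z:=P'\hat g\in\Complex^{r}$ exactly. Fourth, return a sample from $\mathcal{P}_{S^{*}z}$ via Lemma \ref{samp} applied to the stored matrix $S^{*}\in\Complex^{n\times r}$ and the vector $z$, i.e.\ from the $\ell^2$-distribution of the linear combination $S^{*}z=\sum_{k}z_k(S^{*})_{(.,k)}$ (a distribution over $[n]$, as is $\mathcal{P}_{\Phi_{f}(A^{*})b}$).

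For correctness, set $y:=S^{*}z$; by the triangle inequality $\nor{y-\Phi_{f}(A^{*})b}\le\normL{S}\,\normL{P'}\,\epsilon'+\epsilon_0\,\nor{b}$. On the good event, Weyl's inequality (Lemma \ref{weyl}) applied to $S^{*}S\approx A^{*}A$ and to $SS^{*}\approx WW^{*}$ (Lemma \ref{msam}) gives $\normL{S}=O(\normL{A})$ and $\sigma_{\min}(W)\ge\normL{A}/(\sqrt2\,\kappa_2)$; and since $P'=\Phi_{\inv}(W)\Phi_{f}(W^{*})\Phi_{\inv}(W)\Phi_{\inv}(W^{*})$ is, in the left-singular basis of $W$, the diagonal map with entries $f(\sigma_i(W))/\sigma_i(W)^{3}$ with every $\sigma_i(W)\in L$, we get $\normL{P'}\le\Omega/\sigma_{\min}(W)^{3}=O(\kappa_2^{3}\Omega/\normL{A}^{3})$. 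Separately, the hypothesis on $b$ (its projection onto $\mathrm{span}\{u_i:f(\sigma_i)\neq0\}$, the input subspace of $\Phi_{f}(A^{*})$, has norm $\Omega(\nor{b})$), together with $s(A)\setminus\{0\}\subseteq L$ --- so each non-zero $f(\sigma_i)$ is at least $\omega$ --- gives $\nor{\Phi_{f}(A^{*})b}\ge c\,\omega\,\nor{b}$ for an absolute constant $c>0$. Taking $\epsilon_0=\Theta(\epsilon_2\omega)$ and $\epsilon'=\Theta(\epsilon_2\omega\nor{b}\normL{A}^{2}/(\Omega\kappa_2^{3}))$ small enough then forces $\nor{y-\Phi_{f}(A^{*})b}\le\tfrac{\epsilon_2}{4}\nor{\Phi_{f}(A^{*})b}$; in particular $\nor{y}\ge\tfrac12\nor{\Phi_{f}(A^{*})b}>0$, so \eqref{eq:dist} gives $\nor{\mathcal{P}_{y}-\mathcal{P}_{\Phi_{f}(A^{*})b}}_{TV}\le 2\nor{y-\Phi_{f}(A^{*})b}/\nor{\Phi_{f}(A^{*})b}\le\epsilon_2$. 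Thus the sample of the fourth step is drawn $\epsilon_2$-close to $\mathcal{P}_{\Phi_{f}(A^{*})b}$, and a union bound over the failure events of the first two steps and over the time budget of the rejection sampler in the fourth keeps the total failure probability below $\eta$.

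For the complexity, the first step costs what Proposition \ref{ovec} charges plus $\poly(r)$ for the SVD of $W$ and the formation of $P'$; the second step is $r$ calls to Lemma \ref{3dot} with relative-error parameter $\epsilon'/\sqrt r$, costing $O(\nor{b}^{2}\normF{A}^{4}/(r\,\epsilon'^{2}))$ samples each up to logarithms; and the fourth step costs $O(r^{2}\,C(S^{*},z)\,\polylog(nr))$ with $C(S^{*},z)=\nor{z}^{2}\normF{A}^{2}/(r\,\nor{y}^{2})$, which by $\nor{z}\le\normL{P'}(\nor{g}+\epsilon')=O(\Omega\kappa_2^{3}\nor{b}/\normL{A})$ and the lower bound on $\nor{y}$ equals $O(\kappa_2^{6}\Omega^{2}\normF{A}^{2}/(\omega^{2}\normL{A}^{2}r))$. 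Substituting the value of $r$ demanded by the first step --- which is polynomial in $\kappa_2$, $\normF{A}/\normL{A}$, $\Omega/\omega$, $\phi/\omega$, $1/\epsilon_2$ and $1/\eta$ once $\epsilon_0,\epsilon'$ are fixed --- and collecting terms yields the claimed sample and time bounds.

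I expect the only genuine difficulty to be bookkeeping rather than a new idea. The delicate point is that $\hat g$ is merely an approximation of $SA^{*}b$ and is then multiplied by $P'$, whose operator norm can be as large as $\Omega\kappa_2^{3}/\normL{A}^{3}$, so $\epsilon'$ must be chosen small enough to survive this amplification; and the lower bound $\nor{\Phi_{f}(A^{*})b}=\Omega(\omega\nor{b})$ --- exactly why the extra hypothesis on $b$ and the extra parameter $\omega$ (both absent from the coordinate-estimation variant) are needed here --- has to be used consistently with the fact that sampling via \eqref{eq:dist} only tolerates \emph{relative} error in the target vector. All of the substantive analysis --- that \eqref{eq:central} holds with small Frobenius error already for moderate $r$ and $c$ --- is encapsulated in Proposition \ref{ovec} and its supporting Lemmas \ref{fpsd} and \ref{ppcl}.
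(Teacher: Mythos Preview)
Your proposal is correct and follows essentially the same route as the paper: run Algorithm~\ref{alg1} with the parameters of Proposition~\ref{ovec}, estimate $SA^{*}b$ entrywise via Lemma~\ref{3dot}, multiply by $P'$, and sample from $S^{*}(P'\hat g)$ via Lemma~\ref{samp}; correctness then comes from Proposition~\ref{ovec}, the lower bound $\nor{\Phi_{f}(A^{*})b}\ge\Omega(\omega\nor{b})$, and Inequality~\eqref{eq:dist}, and the complexity bookkeeping matches. One small imprecision: Proposition~\ref{ovec} is stated as a vector bound $\nor{S^{*}P'SA^{*}b-\Phi_{f}(A^{*})b}\le\epsilon$, not a Frobenius bound on $S^{*}P'SA^{*}-\Phi_{f}(A^{*})$, but since you only ever use the resulting error $\epsilon_{0}\nor{b}$ this does not affect the argument.
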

\addtocounter{theorem}{+7}

Theorems \ref{th1} and \ref{th2} are stated for a fixed function $f$ and their correctness is guaranteed for matrices $A$ such that $f$ is differentiable on $L$ (remember that $L$ depends on $A$).  Another way of interpreting these theorems is as follows: for a matrix $A$ and vector $b$ (given as inputs), the algorithms of Theorems \ref{th1} and \ref{th2} work for any function $f\colon\Realnn\to\Realnn$ with $f(0)=0$ (and $f(x)>0$ $\forall$ $x>0$ for Theorem 2) that is differentiable in the set $L$.

Section \ref{sec:main} is organized as follows. Section \ref{sub1} presents a crucial lemma that gives an upper bound on $\normF{\Phi_{g}(X)-\Phi_{g}(Y)}$ in terms of $\normF{X-Y}$, the values of $g$ and the values of its derivative $g'$. In Section \ref{subsec:alg1} we present our central procedure, which performs row and column sampling to compute a matrix $P'\in\Complex^{r\times c}$, and analyze this procedure using the lemma proved in Section \ref{sub1}. Finally, in Section \ref{sub3} we prove Theorems \ref{th1} and \ref{th2} by applying appropriate post-processing to the matrix $P'$. 

\subsection{Bound on the distance between two singular value transformations}\label{sub1}
\vspace{-2mm}
The following lemma uses a result from \cite{MIC14} in order to derive an upper bound on the distance between two singular value transformations of positive semi-definite matrices. The proof can be found in Appendix \ref{appA}.\vspace{-1mm}

\begin{lemma}\label{fpsd}
    Let $X, Y \in\Complex^{m\times m}$ be two $m\times m$ positive semi-definite matrices, and write $S=\Conv\left((s(X)\cup s(Y))\setminus \{0\}\right)$. For any function $g\colon\Realnn\to\Realnn$ such that $g(0)=0$ and $g$ is differentiable in $S$, we have:\vspace{-3mm}
    $\normF{\Phi_{g}(X)-\Phi_{g}(Y)}\leq \normF{X-Y}\cdot \max_{\sigma\in S}\Big\{\absl*{g'(\sigma)}+\absl*{\frac{g(\sigma)}{\sigma}}\Big\}$.
\end{lemma}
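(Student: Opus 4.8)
The plan is to reduce the statement to a known Lipschitz-type perturbation bound for matrix functions in the Frobenius norm, which is exactly the kind of result supplied by \cite{MIC14}. The first observation is that, since $X$ and $Y$ are positive semi-definite, their singular values coincide with their non-negative eigenvalues and they admit an SVD in which the left and right singular vectors are equal; since $\Phi_g$ does not depend on the chosen SVD, writing a spectral decomposition $X=\sum_i\lambda_iu_iu_i^{*}$ with $\{u_i\}$ an orthonormal basis gives $\Phi_g(X)=\sum_ig(\lambda_i)u_iu_i^{*}=g(X)$, the ordinary matrix function built from $g$ (the convention $g(0)=0$ taking care of the kernel), and likewise for $Y$. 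So it suffices to bound $\normF{g(X)-g(Y)}$ in terms of $\normF{X-Y}$.

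The subtlety is that $g$ is only assumed differentiable on $S$, whereas the eigenvalues of $X$ and $Y$ lie in $\{0\}\cup S$, which is not an interval once the smallest nonzero singular value is bounded away from $0$. To apply an interval-based perturbation bound I would first extend $g$: assuming $S=[a,b]$ with $0<a\le b$ (the degenerate case $X=Y=0$ being trivial), define $\tilde g$ on $[0,b]$ to equal $g$ on $\{0\}\cup[a,b]$ and to be affine on $[0,a]$ with slope $g(a)/a$. Then $\tilde g(0)=0$, $\tilde g$ agrees with $g$ on every eigenvalue of $X$ and of $Y$ (so $\Phi_g(X)=\tilde g(X)$ and $\Phi_g(Y)=\tilde g(Y)$), and $\tilde g$ is Lipschitz on the interval $[0,b]$ with constant at most $\max\bigl\{\sup_{\sigma\in[a,b]}\absl*{g'(\sigma)},\ \absl*{g(a)/a}\bigr\}$, which in turn is at most $\max_{\sigma\in S}\bigl\{\absl*{g'(\sigma)}+\absl*{g(\sigma)/\sigma}\bigr\}$ since $a\in S$. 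Invoking the perturbation bound of \cite{MIC14} for $\tilde g$ on $[0,b]$ then yields $\normF{\Phi_g(X)-\Phi_g(Y)}=\normF{\tilde g(X)-\tilde g(Y)}\le\mathrm{Lip}(\tilde g)\cdot\normF{X-Y}$, which is the claimed inequality.

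Alternatively — and this also explains the shape of the right-hand side — one can argue directly. With spectral decompositions $X=\sum_i\lambda_iu_iu_i^{*}$ and $Y=\sum_j\mu_jw_jw_j^{*}$ one checks the divided-difference identity $u_i^{*}\bigl(\Phi_g(X)-\Phi_g(Y)\bigr)w_j = g[\lambda_i,\mu_j]\cdot u_i^{*}(X-Y)w_j$, where $g[\lambda,\mu]=\tfrac{g(\lambda)-g(\mu)}{\lambda-\mu}$ when $\lambda\ne\mu$ (the value being irrelevant when $\lambda=\mu$, since then both sides vanish). As the Frobenius norm is unitarily invariant, squaring and summing over $i,j$ reduces the whole estimate to bounding $\absl*{g[\lambda_i,\mu_j]}$: when both $\lambda_i,\mu_j$ are nonzero this equals $\absl*{g'(\xi)}$ for some $\xi$ on the segment joining them (mean value theorem), hence is $\le\max_{\sigma\in S}\absl*{g'(\sigma)}$; when exactly one of them vanishes it equals $\absl*{g(\lambda)/\lambda}\le\max_{\sigma\in S}\absl*{g(\sigma)/\sigma}$; and when both vanish it is $0$. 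This is precisely where the extra term $\absl*{g(\sigma)/\sigma}$ comes from: it accounts for a possible mismatch between the kernels of $X$ and $Y$.

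I expect the only real obstacle to be this kernel bookkeeping, together with checking the hypotheses of the \cite{MIC14} bound carefully — in particular that $\tilde g$ is genuinely Lipschitz across the junction at $\sigma=a$ (where it may have a harmless corner), and that the mean value theorem is legitimately applied on each sub-segment of $S$ (which needs only continuity of $g$ on the closed segment and differentiability on its interior, both implied by differentiability on $S$). Everything else is routine manipulation of Frobenius norms.
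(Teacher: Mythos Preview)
Your proposal is correct, and your second (direct) argument is essentially the paper's proof: the paper quotes the divided-difference bound $\normF{\Psi_g(X)-\Psi_g(Y)}\le\normF{X-Y}\cdot\max_{j,k}\absl*{g[\lambda_j,\mu_k]}$ from \cite{MIC14} (with $\kappa_2(Q)=\kappa_2(Q')=1$ since $X,Y$ are Hermitian, so $\Phi_g=\Psi_g$) and then performs the identical three-case analysis on whether $\lambda_j,\mu_k$ are zero or nonzero, invoking the mean value theorem in the nonzero--nonzero case. Your first route, extending $g$ affinely on $[0,a]$ to a globally Lipschitz $\tilde g$, is a pleasant repackaging of that same case split into a single Lipschitz constant but is not a materially different argument.
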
\vspace{-3mm}

\subsection{Core procedure}\label{subsec:alg1}
Let us consider Algorithm \ref{alg1} 
below. The goal of this subsection is to analyze its behavior.

\begin{algorithm}[ht]
\textbf{Parameters:} Three real numbers $\theta, \gamma \in \left(0,\frac{\normL{A}^{2}}{4\kappa^{2}_{2}\normF{A}^{2}}\right)$ and $\eta\in (0,1)$\\
\textbf{Input:} $A\in \Complex^{m\times n}$ 
stored in the data structure specified in Proposition \ref{tanm}
    \begin{algorithmic}[1]
        \State Set $r = 
        \ceil{3/(\eta\theta^{2})}$.
        \State Set $c = 
        \ceil{3/(\eta\gamma^{2})}$.
        \State Sample $r$ row indices $p_{1}$,...., $p_{r}$ using operation 4) of Proposition \ref{tanm}. Let $S\in \Complex^{r\times n}$ be the matrix whose $s$-th row is $S_{(s,.)} = \frac{A_{(p_{s},.)}}{\nor{A_{(p_{s},.)}}}\frac{\normF{A}}{\sqrt{r}}$, for each $s\in[r]$.
        \State Sample $c$ column indices $q_{1}$,...., $q_{c}$ by repeating the following procedure $c$ times: sample a row index $s\in[r]$ uniformly at random and then sample a column index $q\in[n]$ with probability $\frac{|S_{(s,q)}|^{2}}{\nor{S_{(s,.)}}^{2}}=\frac{|A_{(p_{s},q)}|^{2}}{\nor{A_{(p_{s},.)}}^{2}}$ using operation 5) of Proposition \ref{tanm}. 
        \State 
        Define the matrix $W\in \Complex^{r\times c}$ such that $W_{(s,t)} = \frac{S_{(s,q_{t})}}{\nor{S_{(.,q_{t})}}}\frac{\normF{S}}{\sqrt{c}} = \frac{S_{(s,q_{t})}}{\nor{S_{(.,q_{t})}}}\frac{\normF{A}}{\sqrt{c}}$, for each $(s,t)\in [r]\times[c]$.
        Query all the entries of $A$ corresponding to entries of $W$ using operation~2) of Proposition \ref{tanm}.
        \State Compute the singular value decomposition of matrix $W$.
        \State Compute the matrix $P'=\Phi_{\inv}(W)\Phi_{f}(W^{*})\Phi_{\inv}(W)\Phi_{\inv}(W^{*})$ using the output of the SVD step.
\end{algorithmic}
\caption{Computing the matrix $P'$.}
\label{alg1}
\end{algorithm}

The sampling process of Steps 3--5 is exactly the same as in prior works \cite{ADL19,CHI18,FKV04,GLT18,TAN19}, but with different values for $c$ and $r$. The following lemma analyzes the matrices $S$ and $W$ obtained by this process.
The proof, which is the same as in these prior works (but with different values for $c$ and $r$), can be found in Appendix \ref{appB}.
\begin{lemma}\label{ineq}
    For any input matrix $A$ and any parameters $(\theta, \gamma, \eta)$ in the specified range, with probability at least $1-2\eta/3$ the following statements are simultaneously true for the matrices $S$ and $W$ computed by Algorithm \ref{alg1}:
    \begin{align}
    &\normF{S}=\normF{A}\label{eq:cond1}\\
    &\normF{A^{*}A-S^{*}S}\leq \theta\normF{A}^{2},\label{eq:cond2}\\ 
    &\normF{SS^{*}-WW^{*}}\leq \gamma\normF{S}^{2},\label{eq:cond3}\\
    &\sigma_{\min}(S) > 
    \frac{\normL{A}}{\sqrt{2}\kappa_2},\hspace{8mm}
        \sigma_{\max}(S) < 
        \frac{\normL{A}}{\sqrt{2}\kappa_{2}}\sqrt{\left(2\kappa^{2}_{2}+1\right)},\label{eq:cond4}\\
    &\sigma_{\min}(W) > \frac{\normL{A}}{\sqrt{2}\kappa_2},\hspace{5mm} 
        \sigma_{\max}(W) < \frac{\normL{A}}{\sqrt{2}\kappa_{2}}\sqrt{\left(2\kappa^{2}_{2}+1\right)}\label{eq:cond5}.
    \end{align}
\end{lemma}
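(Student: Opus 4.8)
The plan is to split the five claims into the ones that hold deterministically and the two that are genuinely probabilistic, and to account for the probability only through Lemma~\ref{msam}. Claim~(\ref{eq:cond1}) is immediate: by construction $S$ is exactly the matrix produced by the $\ell^2$-norm row-sampling process of Section~\ref{sub:data} applied to $A$ with $q=r$, so $\normF{S}=\normF{A}$ by Eq.~(\ref{eq:clos}). Claims~(\ref{eq:cond4}) and~(\ref{eq:cond5}) will be shown to follow \emph{deterministically} from~(\ref{eq:cond1})--(\ref{eq:cond3}) together with the assumed range $\theta,\gamma\in\bigl(0,\frac{\normL{A}^{2}}{4\kappa_{2}^{2}\normF{A}^{2}}\bigr)$ (so that $\theta\normF{A}^2<\frac{\normL{A}^2}{4\kappa_2^2}$ and likewise for $\gamma$). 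Hence the only randomness is in establishing~(\ref{eq:cond2}) and~(\ref{eq:cond3}), each with failure probability at most $\eta/3$; a union bound then yields all five statements simultaneously with probability at least $1-2\eta/3$.

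For~(\ref{eq:cond2}) I apply Lemma~\ref{msam} directly to $M=A$, $N=S$ with $\beta=\theta$ and the failure parameter set to $\eta/3$: since $r=\ceil{3/(\eta\theta^{2})}\ge \frac{1}{(\eta/3)\theta^{2}}$, we get $\normF{A^{*}A-S^{*}S}\le\theta\normF{A}^{2}$ with probability at least $1-\eta/3$. For~(\ref{eq:cond3}) I first note that, because every row of $S$ has the same norm $\normF{A}/\sqrt{r}$, the marginal distribution of a column index $q$ sampled in Step~4 is exactly $\nor{S_{(.,q)}}^{2}/\normF{S}^{2}$; consequently Steps~4--5 realise the $\ell^2$-norm row-sampling process applied to $S^{*}$ with $q=c$, producing the matrix $W^{*}$. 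Applying Lemma~\ref{msam} to $M=S^{*}$, $N=W^{*}$ with $\beta=\gamma$ and failure parameter $\eta/3$, and using $c=\ceil{3/(\eta\gamma^{2})}$, gives $\normF{SS^{*}-WW^{*}}=\normF{(S^{*})^{*}S^{*}-(W^{*})^{*}W^{*}}\le\gamma\normF{S}^{2}$ with probability at least $1-\eta/3$. (Along the way one observes that zero-norm rows of $A$, resp. zero-norm columns of $S$, are sampled with probability $0$, so $S$ and $W$ are well defined and each sampled row, resp. column, is a nonzero rescaling of an original one.)

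The bulk of the work is deriving~(\ref{eq:cond4}) and~(\ref{eq:cond5}) from~(\ref{eq:cond1})--(\ref{eq:cond3}). The tools are Weyl's inequality (Lemma~\ref{weyl}), the inequality $\normL{\cdot}\le\normF{\cdot}$, and the fact that for a positive semi-definite matrix the singular values are the eigenvalues, so the nonzero singular values of $A^{*}A$ are the $\sigma_i(A)^2$ and those of $SS^{*}$ and $S^{*}S$ are the $\sigma_i(S)^2$ (similarly for $W$). Applying Weyl to the $n\times n$ pair $(A^{*}A,S^{*}S)$ and to the $r\times r$ pair $(SS^{*},WW^{*})$ yields $\absl{\sigma_i(S)^2-\sigma_i(A)^2}\le\theta\normF{A}^2$ and $\absl{\sigma_i(WW^{*})-\sigma_i(SS^{*})}\le\gamma\normF{S}^2$ for all relevant $i$. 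The upper bounds are then direct: $\sigma_{\max}(S)^2\le\normL{A}^2+\theta\normF{A}^2$ and $\sigma_{\max}(W)^2\le\sigma_{\max}(S)^2+\gamma\normF{S}^2\le\normL{A}^2+(\theta+\gamma)\normF{A}^2<\normL{A}^2\bigl(1+\frac{1}{2\kappa_2^2}\bigr)=\frac{\normL{A}^2}{2\kappa_2^2}(2\kappa_2^2+1)$, matching the stated upper bounds. For the lower bounds, the key observation is the rank chain $\operatorname{rank}(W)\le\operatorname{rank}(S)\le\operatorname{rank}(A)$ (columns of $W$ lie in the column space of $S$, whose rows lie in the row space of $A$): any index $j$ with $\sigma_j(W)\neq 0$ (resp. $\sigma_j(S)\neq 0$) satisfies $j\le\operatorname{rank}(A)$, hence $\sigma_j(A^{*}A)\ge\sigma_{\min}(A)^2=\normL{A}^2/\kappa_2^2$; chaining the two Weyl bounds gives $\sigma_j(W)^2\ge\sigma_j(A)^2-(\theta+\gamma)\normF{A}^2>\frac{\normL{A}^2}{\kappa_2^2}-\frac{\normL{A}^2}{4\kappa_2^2}-\frac{\normL{A}^2}{4\kappa_2^2}=\frac{\normL{A}^2}{2\kappa_2^2}$, i.e. $\sigma_{\min}(W)>\frac{\normL{A}}{\sqrt2\kappa_2}$, and the bound for $S$ follows the same way (subtracting only the $\theta$ term).

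The step I expect to require the most care is precisely this last lower bound on $\sigma_{\min}(W)$: it is tempting to feed the already-proved bound $\sigma_{\min}(S)>\frac{\normL{A}}{\sqrt2\kappa_2}$ into Weyl for $(SS^{*},WW^{*})$, but that only gives $\sigma_{\min}(W)^2>\frac{\normL{A}^2}{2\kappa_2^2}-\gamma\normF{A}^2$, which is too weak. One must instead lower-bound $\sigma_j(S)^2$ by the \emph{full} $\frac{\normL{A}^2}{\kappa_2^2}$ minus $\theta\normF{A}^2$ first (via the rank bound and Weyl against $A^{*}A$ directly), so that the two error terms $\theta\normF{A}^2$ and $\gamma\normF{A}^2$ — each strictly below $\frac{\normL{A}^2}{4\kappa_2^2}$ by the parameter range — are both subtracted from $\frac{\normL{A}^2}{\kappa_2^2}$. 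Everything else is routine bookkeeping with Weyl's inequality and the chosen sizes of $r$ and $c$.
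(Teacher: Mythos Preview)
Your proposal is correct and follows essentially the same route as the paper: deduce (\ref{eq:cond1}) from Eq.~(\ref{eq:clos}), obtain (\ref{eq:cond2}) and (\ref{eq:cond3}) from two applications of Lemma~\ref{msam} with failure probability $\eta/3$ each, and then derive (\ref{eq:cond4})--(\ref{eq:cond5}) deterministically via Weyl's inequality and the rank chain $\operatorname{rank}(W)\le\operatorname{rank}(S)\le\operatorname{rank}(A)$. The paper phrases the lower bound for $\sigma_{\min}(W)$ slightly differently---it first records the sharper intermediate bound $\sigma_{\min}^2(S)>\tfrac{3\normL{A}^2}{4\kappa_2^2}$ and then subtracts $\gamma\normF{S}^2<\tfrac{\normL{A}^2}{4\kappa_2^2}$---but this is exactly your two-step chaining written in two stages, and your warning about not reusing only the already-stated bound $\tfrac{\normL{A}^2}{2\kappa_2^2}$ is precisely the point. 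Your explicit verification that Step~4 realises $\ell^2$-norm column sampling on $S$ (because all rows of $S$ have equal norm) is a detail the paper leaves implicit.
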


Lemma \ref{ineq} above guarantees in particular that with high probability all the nontrivial singular values of the matrix $S$ and $W$ are in the interval $L$ defined in Equation (\ref{eq:intL}).

The main originality of our approach is Step 7 of Algorithm \ref{alg1}, which we now analyze. Let us define the matrix 
$
P=\Phi_{\inv}(S)\Phi_{f}(S^{*})\Phi_{\inv}(S)\Phi_{\inv}(S^{*}).
$
The following lemma shows that the output $P'$ of Algorithm \ref{alg1} is close to the matrix~$P$. Due to space constraints, here we only give a sketch of the proof. A complete proof can be found in Appendix \ref{appB}.

\begin{lemma}\label{ppcl}
    Assume that Statements (\ref{eq:cond1})-(\ref{eq:cond5}) of Lemma \ref{ineq} all hold (which happens with probability at least $1-2\eta/3$). Assume that $f$ is differentiable in $L$ and $f(0)=0$. Then the matrix $P'\in \Complex^{r \times r}$ obtained as the output of Algorithm \ref{alg1} satisfies the following inequality, where $\Omega = \max_{\sigma\in L}\absl*{f(\sigma)}$ and $\phi = \max_{\sigma\in L}\absl*{f'(\sigma)}$.
    \begin{equation}\label{eq:ppcl}
        \normF{P'-P}\leq 2\gamma\normF{A}^{2}\left(\frac{\kappa_{2}}{\normL{A}}\right)^{4}\left\{\phi+7\sqrt{2}\Omega\frac{\kappa_{2}}{\normL{A}}\right\},
    \end{equation}
\end{lemma}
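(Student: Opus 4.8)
The plan is to bound $\normF{P'-P}$ by introducing intermediate matrices that interpolate between $P=\Phi_{\inv}(S)\Phi_{f}(S^{*})\Phi_{\inv}(S)\Phi_{\inv}(S^{*})$ and $P'=\Phi_{\inv}(W)\Phi_{f}(W^{*})\Phi_{\inv}(W)\Phi_{\inv}(W^{*})$, changing one factor at a time, and then using the triangle inequality together with the submultiplicative norm bounds in Equation (\ref{vnor}). The key observation, already highlighted in the introduction, is that $\Phi_{\inv}(S)\Phi_{f}(S^{*})=\Phi_{h}(SS^{*})$ and $\Phi_{\inv}(W)\Phi_{f}(W^{*})=\Phi_{h}(WW^{*})$ for $h(x)=f(\sqrt{x})/\sqrt{x}$ (with $h(0)=0$), so $P=\Phi_{h}(SS^{*})\Phi_{\inv}(S)\Phi_{\inv}(S^{*})$ and $P'=\Phi_{h}(WW^{*})\Phi_{\inv}(W)\Phi_{\inv}(W^{*})$. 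This groups the four factors into three blocks: $\Phi_{h}(SS^{*})$ vs.\ $\Phi_{h}(WW^{*})$, then $\Phi_{\inv}(S)$ vs.\ $\Phi_{\inv}(W)$, then $\Phi_{\inv}(S^{*})$ vs.\ $\Phi_{\inv}(W^{*})$.

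First I would record the spectral estimates that follow from Statements (\ref{eq:cond4})--(\ref{eq:cond5}): all nonzero singular values of $S$ and $W$ lie in the interval $L$, so $\normL{\Phi_{\inv}(S)}=\normL{\Phi_{\inv}(W)}\le \sqrt{2}\kappa_{2}/\normL{A}$, and $\normL{\Phi_{h}(SS^{*})}=\max_{\sigma\in L}\absl*{h(\sigma^{2})}=\max_{\sigma\in L}\absl*{f(\sigma)/\sigma}\le \sqrt{2}\kappa_{2}\Omega/\normL{A}$, and similarly for $W$. Next, for the block differences: for the pseudo-inverse blocks, since $SS^{*}\approx WW^{*}$ by (\ref{eq:cond3}) and all nonzero singular values are bounded below, Weyl's inequality (Lemma \ref{weyl}) controls the perturbation of each singular value and a standard argument (of the same flavour as Lemma \ref{fpsd}, applied to $g=\inv$, or done directly) gives $\normF{\Phi_{\inv}(S)-\Phi_{\inv}(W)}\lesssim \normF{SS^{*}-WW^{*}}\cdot(\kappa_{2}/\normL{A})^{3}$ up to constants, using $\normF{S-W}$-type control obtained from $\normF{SS^{*}-WW^{*}}$; the same bound holds for the Hermitian transposes. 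For the $\Phi_{h}$ block, I would apply Lemma \ref{fpsd} directly with $X=SS^{*}$, $Y=WW^{*}$, $g=h$: the convex hull $S$ of the nonzero singular values lies in $Q$ (see Equation (\ref{eq:intL})), so the Lipschitz-type factor $\max_{\sigma^{2}\in Q}\{\absl*{h'(\sigma^{2})}+\absl*{h(\sigma^{2})/\sigma^{2}}\}$ expands, via the chain rule $h'(x)=\tfrac{1}{2x}f'(\sqrt{x})-\tfrac{1}{2x^{3/2}}f(\sqrt{x})$, into a combination of $\phi$ and $\Omega$ weighted by powers of $\kappa_{2}/\normL{A}$, and one gets $\normF{\Phi_{h}(SS^{*})-\Phi_{h}(WW^{*})}\lesssim \gamma\normF{A}^{2}(\kappa_{2}/\normL{A})^{2}\{\phi+\Omega\kappa_{2}/\normL{A}\}$ after invoking (\ref{eq:cond3}) and (\ref{eq:cond1}).

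Finally I would assemble the telescoping estimate. Writing $P-P'$ as a sum of three terms, each term is a product of two "good" factors (bounded in spectral norm by the estimates above) and one "difference" factor (bounded in Frobenius norm); using $\normF{MN}\le\normL{M}\normF{N}$ and $\normF{MN}\le\normF{M}\normL{N}$ from (\ref{vnor}) on each term and summing yields a bound of the form $\gamma\normF{A}^{2}(\kappa_{2}/\normL{A})^{4}\{\,c_{1}\phi+c_{2}\Omega\,\kappa_{2}/\normL{A}\,\}$; carefully tracking the constants $c_{1},c_{2}$ (the factors of $\sqrt{2}$ coming from the $\sqrt{2}$'s in the definition of $L$ and from $h$'s chain rule, the $2$ from combining the two pseudo-inverse blocks, etc.) should reproduce the stated coefficients $\phi+7\sqrt{2}\,\Omega\,\kappa_{2}/\normL{A}$ up to the overall factor $2$. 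The main obstacle I anticipate is the second block: passing from the guarantee $\normF{SS^{*}-WW^{*}}\le\gamma\normF{S}^{2}$ to a bound on $\normF{\Phi_{\inv}(S)-\Phi_{\inv}(W)}$ requires care, because $\Phi_{\inv}$ of a matrix is governed by $\Phi_{\inv}$ of $SS^{*}$ (via $\Phi_{\inv}(S)=S^{*}\Phi_{\inv}(SS^{*})$ or via $\Phi_{\inv}(S)=\Phi_{g}(SS^{*})S^{*}$ for a suitable $g$), so one must again route through Lemma \ref{fpsd} applied to the right PSD matrices and the right function, and the bookkeeping of how a Frobenius perturbation of $SS^{*}$ propagates through the non-Lipschitz-at-$0$ inverse (safe here only because of the uniform lower bound $\sigma_{\min}>\normL{A}/(\sqrt{2}\kappa_{2})$ from (\ref{eq:cond4})--(\ref{eq:cond5})) is where all the powers of $\kappa_{2}/\normL{A}$ accumulate.
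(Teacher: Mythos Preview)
Your three-block telescoping has a fatal dimension mismatch. The matrix $S$ lies in $\Complex^{r\times n}$ while $W$ lies in $\Complex^{r\times c}$, so $\Phi_{\inv}(S)\in\Complex^{r\times n}$ and $\Phi_{\inv}(W)\in\Complex^{r\times c}$ cannot be subtracted, and likewise for $\Phi_{\inv}(S^{*})$ and $\Phi_{\inv}(W^{*})$. The intermediate terms in your telescoping (those containing $\Phi_{\inv}(S)-\Phi_{\inv}(W)$ or $\Phi_{\inv}(S^{*})-\Phi_{\inv}(W^{*})$) are therefore undefined. Your remark that ``passing from $\normF{SS^{*}-WW^{*}}$ to a bound on $\normF{\Phi_{\inv}(S)-\Phi_{\inv}(W)}$ requires care'' understates the problem: no such bound can exist, because the quantity itself is meaningless. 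The identities you propose, $\Phi_{\inv}(S)=S^{*}\Phi_{\inv}(SS^{*})$ or $\Phi_{\inv}(S)=\Phi_{g}(SS^{*})S^{*}$, do not help (and are not quite right as written), because any such expression still carries a factor of $S$ or $S^{*}$ whose shape differs from that of $W$ or $W^{*}$.

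The paper sidesteps this entirely by grouping the last \emph{two} factors rather than splitting them: since $\Phi_{\inv}(S)\Phi_{\inv}(S^{*})=\Phi_{\inv}(SS^{*})$ and $\Phi_{\inv}(W)\Phi_{\inv}(W^{*})=\Phi_{\inv}(WW^{*})$ are both $r\times r$, one has $P=\Phi_{h}(SS^{*})\Phi_{\inv}(SS^{*})$ and $P'=\Phi_{h}(WW^{*})\Phi_{\inv}(WW^{*})$, and a \emph{two}-block telescoping
\[
P'-P=\{\Phi_{h}(WW^{*})-\Phi_{h}(SS^{*})\}\Phi_{\inv}(WW^{*})+\Phi_{h}(SS^{*})\{\Phi_{\inv}(WW^{*})-\Phi_{\inv}(SS^{*})\}
\]
is well defined. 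Each difference now involves only the $r\times r$ PSD matrices $SS^{*}$ and $WW^{*}$, to which Lemma~\ref{fpsd} applies directly (once with $g=h$, once with $g=\inv$), and (\ref{eq:cond3}) controls $\normF{SS^{*}-WW^{*}}$. Your plan for the $\Phi_{h}$ block and your spectral estimates are fine; the fix is simply to merge your second and third blocks into the single $\Phi_{\inv}(SS^{*})$ block, after which the constant-tracking you outline goes through.
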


\begin{proof}[Sketch of the proof]
    Let us define a function $h\colon \Realnn\to\Realnn$ as follows. For any $\sigma\in Q$ we define 
    $h(\sigma)=f(\sqrt{\sigma})\inv(\sqrt{\sigma})=f(\sqrt{\sigma})/\sqrt{\sigma}$,
    we define $h(0)=f(0)\inv(0)=0$, and we define $h(\sigma)$ arbitrarily when $\sigma\notin Q\cup\{0\}$. Since $f$ is differentiable in $L$, the function $h$ is differentiable in $Q$. From Equations (\ref{eq:cond4}) and (\ref{eq:cond5}) we know that
    $\Conv\left(s(SS^{*})\cup s(WW^{*})\setminus\{0\}\right) \subset Q$
    and can write $\Phi_{h}(SS^{*})=\Phi_{\inv}(S)\Phi_{f}(S^{*})$ and $\Phi_{h}(WW^{*})=\Phi_{\inv}(W)\Phi_{f}(W^{*})$. 
    
    Using the definition of $P$ and $P'$, we now have
    \[\begin{split}
        \normF{P'-P}\\
        &\hspace{-17mm}=\normF{\Phi_{\inv}(W)\Phi_{f}(W^{*})\Phi_{\inv}(W)\Phi_{\inv}(W^{*})-\Phi_{\inv}(S)\Phi_{f}(S^{*})\Phi_{\inv}(S)\Phi_{\inv}(S^{*})}\\
        &\hspace{-17mm}= \normF{\Phi_{h}(WW^{*})\Phi_{\inv}(WW^{*})-\Phi_{h}(SS^{*})\Phi_{\inv}(SS^{*})}\\
        &\hspace{-17mm}=\normF{\left\{\Phi_{h}(WW^{*})-\Phi_{h}(SS^{*})\right\}\Phi_{\inv}(WW^{*})+\Phi_{h}(SS^{*})\left\{\Phi_{\inv}(WW^{*})-\Phi_{\inv}(SS^{*})\right\}}\\
        &\hspace{-17mm}\leq \normF{\left\{\Phi_{h}(WW^{*})-\Phi_{h}(SS^{*})\right\}\Phi_{\inv}(WW^{*})}+\normF{\Phi_{h}(SS^{*})\left\{\Phi_{\inv}(WW^{*})-\Phi_{\inv}(SS^{*})\right\}}\\
        &\hspace{-17mm}\leq \normL{\Phi_{\inv}(WW^{*})}\normF{\Phi_{h}(WW^{*})-\Phi_{h}(SS^{*})}+\normL{\Phi_{h}(SS^{*})}\normF{\Phi_{\inv}(WW^{*})-\Phi_{\inv}(SS^{*})}.
    \end{split}\]
      
    Using Lemma \ref{fpsd} twice for $\Phi_{h}$ and $\Phi_{\inv}$, we obtain
    \[\begin{split}
        \normF{P'-P}&\leq \normL{\Phi_{\inv}(WW^{*})}\normF{WW^{*}-SS^{*}}\left(\max_{\sigma\in Q}\left\{\absl*{h'(\sigma)}+\absl*{\frac{h(\sigma)}{\sigma}}\right\}\right)\\
        &+\normL{\Phi_{h}(SS^{*})}\normF{WW^{*}-SS^{*}}\left(\max_{\sigma\in Q}\left\{\absl*{\inv'(\sigma)}+\absl*{\frac{\inv(\sigma)}{\sigma}}\right\}\right).\\
    \end{split}\]
    
    Now we use condition (\ref{eq:cond3}). Also, since the non-trivial singular values of $SS^{*}$ and $WW^{*}$ lie in the set $Q$, the non-trivial singular values of $S$ and $W$ lie in set $L$ (i.e., if $\sigma\in Q$ then $\sigma^{1/2}\in L$). Using this observation, we can then derive the claimed upper bound by routine calculations (omitted here).
\end{proof}

The next proposition is the main result of this subsection. Due to space constrain, the proof has been deferred to Appendix \ref{appB}.
\begin{proposition}\label{ovec}
    Let $b\in \Complex^{m}$ be any non-zero vector and $\epsilon$ be any positive number such that 
    \begin{align}
        \epsilon < \frac{1}{2}\normL{A}\nor{b}\left\{\phi+3\sqrt{2}\Omega\frac{\kappa_{2}}{\normL{A}}\right\}.
        \label{eq:eps}
    \end{align} 
    Let us fix the parameters of Algorithm \ref{alg1} as follows:
    \begin{align}
        \theta&=\epsilon\left(2\normF{A}^{2}\frac{\kappa_{2}^{2}}{\normL{A}}\left\{\phi+3\sqrt{2}\Omega\frac{\kappa_{2}}{\normL{A}}\right\}\nor{b}\right)^{-1},\label{ovec1}\\
        \gamma&=\epsilon\left(2\normF{A}^{2}\frac{\kappa_{2}^{2}}{\normL{A}}\left\{\phi+7\sqrt{2}\Omega\frac{\kappa_{2}}{\normL{A}}\right\}\nor{b}\right)^{-1}.\label{ovec2}
    \end{align}
    Then, under the assumptions of Lemma \ref{ppcl}, the two vectors $x=S^{*}P'SA^{*}b$ and $\Phi_{f}(A^{*})b$ satisfy the inequality
    $\nor{x-\Phi_{f}(A^{*})b}\leq \epsilon.
    $
\end{proposition}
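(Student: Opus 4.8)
The plan is to prove Proposition~\ref{ovec} by chaining together three approximation steps, following the "telescoping" strategy outlined in the introduction: namely
\[
S^{*}P'SA^{*}b \;\approx\; S^{*}PSA^{*}b \;=\; \Phi_{h}(S^{*}S)A^{*}b \;\approx\; \Phi_{h}(A^{*}A)A^{*}b \;=\; \Phi_{f}(A^{*})b,
\]
where $h$ is the function defined in the sketch of Lemma~\ref{ppcl}, i.e.\ $h(\sigma)=f(\sqrt{\sigma})/\sqrt{\sigma}$ on $Q\cup\{0\}$. The two exact equalities at the ends are the algebraic identities $\Phi_{f}(A^{*})=\Phi_{h}(A^{*}A)A^{*}$ (our "first insight") and, analogously, $S^{*}PS=\Phi_{h}(S^{*}S)$; both follow by writing out the SVD and checking the action on singular vectors, together with the fact that $\Phi_\inv(S)\Phi_f(S^*)=\Phi_h(SS^*)$. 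So the real work is bounding the two approximation errors.

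\textbf{Step 1: the $P'\approx P$ error.} First I would bound $\nor{S^{*}(P'-P)SA^{*}b}$ using the submultiplicativity inequalities in~(\ref{vnor}): this is at most $\normL{S}^{2}\normL{A}\nor{b}\normF{P'-P}\le \sigma_{\max}(S)^2\normL{A}\nor{b}\normF{P'-P}$, and by Lemma~\ref{ineq} (Statement~(\ref{eq:cond4})) we have $\sigma_{\max}(S)^2 < \frac{\normL{A}^2}{2\kappa_2^2}(2\kappa_2^2+1)\le \normL{A}^2$ (since $\kappa_2\ge 1$). Then I plug in the bound on $\normF{P'-P}$ from Lemma~\ref{ppcl}, which gives a term proportional to $\gamma\,\normF{A}^{2}(\kappa_2/\normL{A})^4\{\phi+7\sqrt2\,\Omega\kappa_2/\normL{A}\}$. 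With the choice of $\gamma$ in~(\ref{ovec2}) this whole quantity is engineered to be at most $\epsilon/2$ (one should check the arithmetic, but the constants have visibly been set up so that the $7\sqrt 2$ factors, the $\kappa_2$ powers, $\normF{A}^2$, $\nor{b}$, and $\normL{A}$ all cancel against the definition of $\gamma$).

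\textbf{Step 2: the $\Phi_{h}(S^{*}S)\approx\Phi_{h}(A^{*}A)$ error.} For the second approximation I would bound
\[
\nor{\big(\Phi_{h}(S^{*}S)-\Phi_{h}(A^{*}A)\big)A^{*}b}\le \normF{\Phi_{h}(S^{*}S)-\Phi_{h}(A^{*}A)}\cdot\normL{A}\cdot\nor{b},
\]
and then apply Lemma~\ref{fpsd} to the positive semidefinite matrices $X=A^{*}A$ and $Y=S^{*}S$ with $g=h$. This requires knowing that $h$ is differentiable on $\Conv\big((s(A^{*}A)\cup s(S^{*}S))\setminus\{0\}\big)$: the singular values of $S^{*}S$ lie in $Q$ by~(\ref{eq:cond4}), and those of $A^{*}A$ lie in $[\normL{A}^2/\kappa_2^2,\normL{A}^2]\subset Q$ (since the singular values of $A$ are between $\normL{A}/\kappa_2$ and $\normL{A}$), and $h$ is differentiable on $Q$ because $f$ is differentiable on $L$ and $\sigma\mapsto\sqrt\sigma$ maps $Q$ into $L$. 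Lemma~\ref{fpsd} then gives a factor $\max_{\sigma\in Q}\{|h'(\sigma)|+|h(\sigma)/\sigma|\}$; I would bound this in terms of $\Omega=\max_L|f|$ and $\phi=\max_L|f'|$ using the chain rule ($h'(\sigma)=\frac{f'(\sqrt\sigma)}{2\sigma}-\frac{f(\sqrt\sigma)}{2\sigma^{3/2}}$ and $h(\sigma)/\sigma=f(\sqrt\sigma)/\sigma^{3/2}$), together with the lower bound $\sigma\ge\normL{A}^2/(2\kappa_2^2)$ on $Q$, yielding something of the shape $(\kappa_2/\normL{A})^2\{\phi+c\,\Omega\kappa_2/\normL{A}\}$ for an explicit constant. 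Finally I combine this with $\normF{A^{*}A-S^{*}S}\le\theta\normF{A}^2$ from~(\ref{eq:cond2}) and the choice of $\theta$ in~(\ref{ovec1}), which is again calibrated to make this contribution at most $\epsilon/2$. Adding the two halves via the triangle inequality gives the claimed bound $\nor{x-\Phi_f(A^{*})b}\le\epsilon$. (The hypothesis~(\ref{eq:eps}) and the requirement "sufficiently small $\epsilon$" ensure $\theta,\gamma$ fall in the admissible range $\big(0,\frac{\normL{A}^2}{4\kappa_2^2\normF{A}^2}\big)$ demanded by Algorithm~\ref{alg1}, which I would verify at the end.)

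\textbf{Main obstacle.} The conceptual content is entirely in Lemma~\ref{fpsd} and Lemma~\ref{ppcl}, both of which we may assume; so the remaining difficulty is purely bookkeeping: getting the constant in Step~2's bound on $\max_{\sigma\in Q}\{|h'|+|h/\sigma|\}$ right — in particular verifying it matches the $\{\phi+3\sqrt2\,\Omega\kappa_2/\normL{A}\}$ shape that appears in the definition of $\theta$, as opposed to the $7\sqrt2$ shape in $\gamma$ — and confirming that the two error contributions really do each come out to $\le\epsilon/2$ rather than, say, $\le\epsilon$ each or with an extra stray factor of $\kappa_2$. This is where a careless sign or a missing factor of $2$ would propagate, so I would do that particular estimate carefully and leave the rest as "routine calculations."
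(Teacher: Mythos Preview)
Your proposal matches the paper's proof exactly: the same telescoping through $\Phi_h(S^{*}S)$ and $\Phi_h(A^{*}A)$, with the two error pieces handled respectively by Lemma~\ref{ppcl} and Lemma~\ref{fpsd}, and the same closing check that $\theta,\gamma$ land in the admissible range for Algorithm~\ref{alg1}. One arithmetic slip in your Step~1: the inequality $\frac{\normL{A}^2}{2\kappa_2^2}(2\kappa_2^2+1)\le\normL{A}^2$ is false, since the left side equals $\normL{A}^2\bigl(1+\tfrac{1}{2\kappa_2^2}\bigr)>\normL{A}^2$; at the corresponding spot the paper bounds $\normL{S}^2$ by $\max_{\sigma\in L}\sigma^2$ and then substitutes $\tfrac{\normL{A}^2}{2\kappa_2^2}$ --- which is the \emph{minimum} of $\sigma^2$ over $L$, not the maximum --- so the stray factor of order $\kappa_2^2$ you were worried about appears to be lurking in the paper's own arithmetic as well.
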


\subsection{Post-processing and proofs of Theorems \ref{th1} and \ref{th2}}\label{sub3}

\begin{proof}[Proof of Theorem \ref{th1}]
    Let us write 
    \begin{equation}\label{eq:ep1}
        \epsilon'=\frac{\epsilon_{1}}{4\Omega\sqrt{r\left(2\kappa_{2}^{2}+1\right)}}\left(\frac{\normL{A}}{\kappa_{2}}\right)^{2}
        \hspace{3mm}
        \textrm{ and }
        \hspace{3mm}
        \delta'=\eta/3r.
    \end{equation}
    
    The algorithm we consider for estimating the value $(\Phi_{f}(A^{*})b)_{i}$ is described below.
    
    \begin{algorithm}[ht]
    \begin{algorithmic}[1]
        \State Apply Algorithm \ref{alg1} with matrix $A$ as input, using the values $\theta$ and $\gamma$ given by Equations~(\ref{ovec1}) and (\ref{ovec2}) with $\epsilon=\epsilon_{1}/2$, and using the desired $\eta$ as parameters. This returns a matrix $P'$ and a description of a matrix $S$.
        \State Compute an estimation $z$ of the vector $SA^{*}b\in\Complex^{r\times 1}$ by estimating, for each $j\in [r]$, the quantity $S_{(j,.)}A^{*}b$ using Lemma \ref{3dot} with parameters $\epsilon'$ and $\delta'$ given by Equation (\ref{eq:ep1}).
        \State Compute the row vector $S^{*}_{(i,.)}\in\Complex^{1\times r}$ by querying all the elements in the $i$-th row of $S^{*}$ (i.e., the $i$-th column of $S$).
        \State Output the complex number $S^{*}_{(i,.)}P'z$.
    \end{algorithmic}
    \caption{Estimating $(\Phi_{f}(A^{*})b)_i$}
    \label{alg2}
    \end{algorithm}
    
    We now analyze Algorithm \ref{alg2}. Let us write $x'=S^{*}P'z\in \Complex^{n\times 1}$, where $P'$ and $z$ are the matrices and the vector computed at Steps 1 and 2 of the algorithm, respectively. Remember that $P'=\Phi_{\inv}(W)\Phi_{f}(W^{*})\Phi_{\inv}(W)\Phi_{\inv}(W^{*})$, where $W$ is the matrix computed in Algorithm~\ref{alg1}. Note that the output of Algorithm \ref{alg2} is the $i$-th coordinate of the vector $x'$.   
    
    Let us write $x=S^{*}P'SA^{*}b$. 
    From the analysis of Section \ref{subsec:alg1}, and especially Lemma \ref{ineq} and Proposition \ref{ovec}, we know that Statements (\ref{eq:cond4}) and (\ref{eq:cond5}) and the inequality $\nor{x-\Phi_{f}(A^{*})b}\leq \frac{\epsilon_{1}}{2}$ simultaneously hold with probability $1-2\eta/3$. 
    
    The vector $x'$ then satisfies the inequality
    \[\begin{split}
        \nor{x'-x}& \leq \nor{S^{*}P'z-S^{*}P'SA^{*}b}\\
        & \leq \normL{S^{*}}\normL{P'}\nor{z-SA^{*}b}\\
        & \leq \normL{S^{*}}\normL{\Phi_{\inv}(W)}\normL{\Phi_{f}(W^{*})}\normL{\Phi_{\inv}(W)}\normL{\Phi_{\inv}(W^{*})}\nor{z-SA^{*}b}\\
        &\leq \left\{\frac{\normL{A}}{\sqrt{2}\kappa_{2}}\left(2\kappa_{2}^{2}+1\right)^{1/2}\right\}\left\{\Omega\left(\frac{\sqrt{2}\kappa_{2}}{\normL{A}}\right)^{3}\right\}\nor{z-SA^{*}b},
    \end{split}\]
    where we used Statements (\ref{eq:cond4}) and (\ref{eq:cond5}) and the bound $\normL{\Phi_{f}(W^{*})}\leq \Omega$ to derive the last inequality.
    
    Lemma \ref{3dot} now guarantees that with probability at least $1-\eta/3$ we have $\nor{z-SA^{*}b}\leq \epsilon'\sqrt{r}$, which implies:
    \[\begin{split}    
        \nor{x'-x}&\leq
        \left\{\frac{\normL{A}}{\sqrt{2}\kappa_{2}}\left(2\kappa_{2}^{2}+1\right)^{1/2}\right\}\left\{\Omega\left(\frac{\sqrt{2}\kappa_{2}}{\normL{A}}\right)^{3}\right\}\left\{\frac{\epsilon_{1}}{4\Omega\sqrt{\left(2\kappa_{2}^{2}+1\right)}}\left(\frac{\normL{A}}{\kappa_{2}}\right)^{2}\right\}=\frac{\epsilon_{1}}{2}.
    \end{split}\]

    In conclusion, the inequality 
    \begin{equation}\label{xdas}
        \nor{x'-\Phi_{f}(A^{*})b}\leq \nor{x'-x}+\nor{x-\Phi_{f}(A^{*})b}\leq \epsilon_{1}
    \end{equation}
    holds with overall probability at least $1-\eta$ for sufficiently small $\epsilon_{1}>0$ (a precise upper bound can be derived by using Proposition \ref{ovec} with $\epsilon=\epsilon_{1}/2$).
    
    This implies that Algorithm~\ref{alg2} outputs, with probability at least $1-\eta$, the $i$-th coordinate of a vector $x'$ that satisfies Equation (\ref{xdas}). This proves the correctness of Algorithm $\ref{alg2}$.
    
    Let us now analyze the complexity of Algorithm \ref{alg2}. Algorithm \ref{alg1} (and thus Step 1 of Algorithm~\ref{alg2}) has time complexity dominated by the computation of the SVD of the matrix $W$, i.e.,
    \[O\left(\max\left\{r^{2}c, rc^{2}\right\}\polylog{\left(mn\right)}\right)=
    O\left(\frac{\normF{A}^{12}\normL{b}^{6}\kappa_{2}^{12}}{\epsilon_{1}^{6}\eta^{3}}\left\{\phi+7\sqrt{2}\Omega\frac{\kappa_{2}}{\normL{A}}\right\}^{6}\polylog{\left(mn\right)}\right).\]
    Algorithm~\ref{alg1} uses $r+c$ samples.
    
    Observe that $\nor{S_{(j,.)}}=\frac{\normF{A}}{\sqrt{r}}$ for any $j\in[r]$ (see Step 3 of Algorithm \ref{alg1}). 
    Step 2 of Algorithm \ref{alg2} thus uses 
    \begin{align*}
        O\left(\frac{\nor{S_{(j,.)}}^{2}\nor{b}^{2}\normF{A^{*}}^{2}}{\epsilon'^{2}}\polylog{\left(\frac{mn}{\delta}\right)}r\right)
        &=\\
        &\hspace{-32mm}
        O\left(\frac{\normF{A}^{8}\nor{b}^{4}\kappa_{2}^{4}}{\epsilon_{1}^{4}\eta}\left(\frac{\kappa_{2}}{\normL{A}}\right)^{6}\Omega^{2}\left\{\phi+3\sqrt{2}\Omega\frac{\kappa_{2}}{\normL{A}}\right\}^{2}\polylog{\left(\frac{mn}{\eta}\right)}\right)
    \end{align*}
    samples, and has the same time complexity.
    
    Finally, Step 3 of Algorithm \ref{alg2} has time complexity $O(r)$, while Step~4 has time complexity $O(r^{2})$. These two steps do not use any sample.
    
    In conclusion, the time complexity of Algorithm~\ref{alg2} is dominated by Step 1, while the sample complexity is dominated by Step 2.
\end{proof}

\begin{proof}[Proof sketch of Theorem \ref{th2}]
    Let us write 
    \begin{equation}\label{eq:ep2}
        \epsilon''=\frac{\epsilon_{2}\omega\alpha\nor{b}}{8\Omega\sqrt{r\left(2\kappa_{2}^{2}+1\right)}}\left(\frac{\normL{A}}{\kappa_{2}}\right)^{2}
                \hspace{3mm}
        \textrm{ and }
        \hspace{3mm}
        \delta''=\eta/3r,
    \end{equation}
    where $\alpha$ is a constant such that the norm of the projection of $b$ on the column space of $\Phi_{f}(A)$ is at least $\alpha\nor{b}$.
    The algorithm we use to sample from a distribution $\epsilon_{2}$-close to $\mathcal{P}_{\Phi_{f}(A^{*})b}$ is described below. 
    \begin{algorithm}[ht]
    \begin{algorithmic}[1]
    \State Apply Algorithm \ref{alg1} with matrix $A$ as input, using the values $\theta$ and $\gamma$ given by Equations~(\ref{ovec1}) and (\ref{ovec2}) with $\epsilon=\frac{\epsilon_{2}\omega\alpha}{4}\nor{b}$, and using the desired $\eta$ as parameters.
    This returns a matrix $P'$ and a description of a matrix $S$.
    \State Compute an estimation $z$ of the vector $SA^{*}b\in\Complex^{r\times 1}$ by estimating, for each $j\in [r]$, the quantity $S_{(j,.)}A^{*}b$ using Lemma \ref{3dot} with parameters $\epsilon''$ and $\delta''$ given by Equation (\ref{eq:ep2}).
    \State Compute the vector $P'z$.
    \State Use Lemma \ref{samp} to output a sample from $x'=S^{*}P'z$. 
    \end{algorithmic}
    \caption{Sample access to a distribution $\epsilon_{2}$-close to $\mathcal{P}_{\Phi_{f}(A^{*})b}$}
    \label{alg3}
    \end{algorithm} 
    
    Note that Algorithm \ref{alg3} is very similar to Algorithm \ref{alg2}: the main modification is Step 4. 
    Also note that we can use Lemma \ref{samp} since we have sample access to the columns of $S^{*}$, from the information obtained at Step 1, and we can compute the vector $P'z$ from the information obtained at Steps 1 and 2. 
    
    The complete analyses of the correctness and the complexity of Algorithm \ref{alg3}, which are similar to the analyses done for Algorithm \ref{alg2} in the proof of Theorem \ref{th2}, can be found in Appendix \ref{appC}.   
\end{proof}




\appendix
\section{Proof of Lemma \ref{fpsd}}\label{appA}
In this appendix we give the proof of Lemma \ref{fpsd}. Our proof relies on a prior work that established similar bounds for eigenvalue transformations. We first present this result in Part~\ref{sub:eig}, and then present the proof of Lemma \ref{fpsd} in Part \ref{sub:prooffpsf}.
\subsection{Eigenvalue transformations}\label{sub:eig}
Let us introduce below another transformation applicable to a diagonalizable matrix $M\in\Complex^{m\times m}$, i.e., a matrix than can be written as 
\begin{equation}\label{eq2}
    M = Q \:\diag(\lambda_1,\ldots,\lambda_m)\:Q^{-1}
\end{equation}
for some invertible matrix $Q\in\Complex^{m\times m}$ where $\diag(\lambda_1,\ldots,\lambda_m)$ denotes the $m\times m$ diagonal matrix with diagonal entries as $m$ complex numbers $\lambda_1,\ldots,\lambda_m$.  We write $e(M)=\{\lambda_1,\ldots,\lambda_m\}$, which is the set of eigenvalues of $M$. 

\begin{definition}[Eigenvalue Transformation]
For any function $f\colon \Complex\to \Complex$ such that $f(0)=0$, the Eigenvalue Transformation associated to $f$ is the function denoted $\Psi_f$ that maps any diagonalizable matrix $M\in \Complex^{m\times m}$ to the matrix $\Phi_{f}(M)\in\Complex^{m\times m}$ defined as follows: 
\[\Psi_{f}(M) = Q \:\diag(f(\lambda_1),\ldots,f(\lambda_m))\:Q^{-1},\]
where $Q$ and $\lambda_1,\ldots,\lambda_m$ correspond to the decomposition of $M$ given in Eq.~$(\ref{eq2})$.
\end{definition}

Similarly to Definition \ref{def:SVT}, due to our assumption on $f$ the eigenvalue transformation function does not increase the rank of the input matrix.

We will later use the following upper bound on the norm of $\normF{\Psi_{f}(M)-\Psi_{f}(M')}$  for diagonalizable matrices $M$ and $M'$ from \cite{MIC14}.

\begin{lemma}[Corollary 2.3 in \cite{MIC14}]\label{diag}
    Let $M$ and $M'$ be $m\times m$ diagonalizable matrices with decompositions
    \begin{align*}   
    M &= \:Q \:\diag(\lambda_1,\ldots,\lambda_m)\:Q^{-1},\\
    M' &= \:Q' \:\diag(\lambda'_1,\ldots,\lambda'_m)\:Q'^{-1}.
    \end{align*}
    For any function $f\colon\Complex\to\Complex$ we have 
    \[\normF{\Psi_{f}(M)-\Psi_{f}(M')}\leq \kappa_{2}(Q)\kappa_{2}(Q')\normF{M-M'}\cdot\max_{j\in [m],k\in [m]}\left\{\absl*{\frac{f(\lambda_{j})-f(\lambda'_{k})}{\lambda_{j}-\lambda'_{k}}}\right\},\]
    where the convention $\absl*{\frac{f(\lambda_{j})-f(\lambda'_{k})}{\lambda_{j}-\lambda'_{k}}}=0$ if $\lambda_{j}=\lambda'_{k}$ is used.
\end{lemma}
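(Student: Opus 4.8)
The plan is to reduce the whole estimate to an entrywise comparison carried out in a common basis. Write $D=\diag(\lambda_1,\ldots,\lambda_m)$, $D'=\diag(\lambda'_1,\ldots,\lambda'_m)$, $D_f=\diag(f(\lambda_1),\ldots,f(\lambda_m))$ and $D'_f=\diag(f(\lambda'_1),\ldots,f(\lambda'_m))$, so that $M=QDQ^{-1}$, $M'=Q'D'Q'^{-1}$, $\Psi_f(M)=QD_fQ^{-1}$ and $\Psi_f(M')=Q'D'_fQ'^{-1}$ by the definition of $\Psi_f$. Setting $T=Q^{-1}Q'$, I would introduce the two conjugated differences
\[
E=Q^{-1}\big(\Psi_f(M)-\Psi_f(M')\big)Q'=D_fT-TD'_f,\qquad
F=Q^{-1}(M-M')Q'=DT-TD',
\]
and observe that, since all the $D$-matrices are diagonal, reading off the $(j,k)$ entry gives $E_{jk}=\big(f(\lambda_j)-f(\lambda'_k)\big)T_{jk}$ and $F_{jk}=(\lambda_j-\lambda'_k)T_{jk}$.

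From these two identities the central step is to write $E_{jk}=c_{jk}F_{jk}$ for every pair $(j,k)$, where $c_{jk}=\frac{f(\lambda_j)-f(\lambda'_k)}{\lambda_j-\lambda'_k}$ if $\lambda_j\neq\lambda'_k$ and $c_{jk}=0$ otherwise. This is consistent with the stated convention: when $\lambda_j=\lambda'_k$ we have both $F_{jk}=0$ and, since $f(\lambda_j)=f(\lambda'_k)$, also $E_{jk}=0$; when $\lambda_j\neq\lambda'_k$ the identity is immediate from the formulas above. Squaring and summing over $j,k$ then yields $\normF{E}^2=\sum_{j,k}\absl{c_{jk}}^2\absl{F_{jk}}^2\le\big(\max_{j,k}\absl{c_{jk}}\big)^2\normF{F}^2$, i.e. $\normF{E}\le\big(\max_{j,k}\absl{c_{jk}}\big)\normF{F}$, with $\max_{j,k}\absl{c_{jk}}$ being exactly the maximum appearing on the right-hand side of the lemma.

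It then remains to transfer this back and forth between the conjugated and unconjugated quantities. Since $\Psi_f(M)-\Psi_f(M')=QEQ'^{-1}$, the mixed Frobenius--spectral inequalities in~(\ref{vnor}) give $\normF{\Psi_f(M)-\Psi_f(M')}\le\normL{Q}\,\normL{Q'^{-1}}\,\normF{E}$; and since $F=Q^{-1}(M-M')Q'$, they give $\normF{F}\le\normL{Q^{-1}}\,\normL{Q'}\,\normF{M-M'}$. Chaining the three bounds and using that for an invertible matrix the spectral norm is its largest singular value while the spectral norm of its inverse is the reciprocal of its smallest singular value, so that $\normL{Q}\normL{Q^{-1}}=\kappa_{2}(Q)$ and $\normL{Q'}\normL{Q'^{-1}}=\kappa_{2}(Q')$, reproduces the claimed inequality.

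There is no genuine obstacle in this argument; it is essentially a normalization computation. The only points that need care are the degenerate case $\lambda_j=\lambda'_k$, where one must check that the convention $\absl{c_{jk}}=0$ is forced (done above), and bookkeeping of which of the two mixed inequalities in~(\ref{vnor}) to apply on each side so that the factors assemble precisely into $\kappa_{2}(Q)\kappa_{2}(Q')$ rather than some other combination of $\normL{Q^{\pm1}}$ and $\normL{Q'^{\pm1}}$.
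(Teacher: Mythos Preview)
Your argument is correct and self-contained. Note, however, that the paper does not actually give a proof of this statement: Lemma~\ref{diag} is simply quoted from \cite{MIC14} (as Corollary~2.3 there) and used as a black box in the proof of Lemma~\ref{fpsd}. So there is no ``paper's own proof'' to compare against.

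That said, your approach---conjugating by $Q^{-1}$ on the left and $Q'$ on the right to reduce both $\Psi_f(M)-\Psi_f(M')$ and $M-M'$ to Sylvester-type expressions $D_fT-TD'_f$ and $DT-TD'$, reading off the entrywise relation $E_{jk}=c_{jk}F_{jk}$, and then undoing the conjugation at the cost of the condition numbers---is essentially the standard route to such perturbation bounds for functions of diagonalizable matrices, and is presumably close to what \cite{MIC14} does. The only delicate point, which you handle correctly, is the degenerate case $\lambda_j=\lambda'_k$: there both $E_{jk}$ and $F_{jk}$ vanish, so the convention $c_{jk}=0$ is consistent and the entrywise bound $|E_{jk}|\le(\max|c_{jk}|)\,|F_{jk}|$ still holds.
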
 
\subsection{Proof of Lemma \ref{fpsd}}\label{sub:prooffpsf}
\begin{proof}[Proof of Lemma \ref{fpsd}]
    For a positive semi-definite matrix the singular values are equal to the eigenvalues and the matrix $Q$ in the decomposition of Equation (\ref{eq2}) can be taken as a unitary matrix. For means that for a positive semi-definite matrix, its singular value transformation is equal to its eigenvalue transformation. Note that if $Q$ is unitary then $\kappa_2(Q)=1$.
    
    Using Lemma \ref{diag} we thus obtain:
    \[
    \normF{\Phi_{g}(X)-\Phi_{g}(Y)}=
    \normF{\Psi_{g}(X)-\Psi_{g}(Y)}\leq 
    \normF{X-Y}\cdot \max_{j\in [m],k\in [m]}\left\{\absl*{\frac{g(\sigma_{j})-g(\sigma'_{k})}{\sigma_{j}-\sigma'_{k}}}\right\},
    \]
    where we write $s(X)=\{\sigma_1,\sigma_2,\ldots,\sigma_m\}$ and $s(Y)=\{\sigma'_1,\sigma'_2,\ldots,\sigma'_m\}$. 
    
    For conciseness, let us write $\delta_{jk}=\absl*{\frac{g(\sigma_{j})-g(\sigma'_{k})}{\sigma_{j}-\sigma'_{k}}}$ for any $(j,k)\in[m]\times [m]$. There are three cases:
    \begin{enumerate}
        \item For any $(j,k)$ such that $\sigma_{j}\neq 0$ and $\sigma'_{k}\neq 0$ we have $\delta_{jk}\le \max_{\sigma\in S}\absl*{g'(\sigma)}$. This happens because~$g$ is differentiable in $S$. Indeed, if we choose values $a\in S$ and $b\in S$ such that $a<b$, we can always find a value $\sigma\in [a,b]$ such that $g'(\sigma)=\frac{g(b)-g(a)}{b-a}$ by then Intermediate Value Theorem. Since this happens for all values of $a,b$, we obtain $\delta_{jk}\leq \max_{\sigma\in S}\absl*{g'(\sigma)}$.
        \item For any $(j,k)$ such that $\sigma_{j}=0$ and $\sigma'_{k}\neq 0$, or $\sigma_{j}\neq 0$ and $\sigma'_{k}=0$, we have $\delta_{jk}\le \max_{\sigma\in S}\absl*{\frac{g(\sigma)}{\sigma}}$;
        \item For any $(j,k)$ such that $\sigma_{j}=0$ and $\sigma'_{k}=0$ we have $\delta_{jk}=0$ (by convention in Lemma \ref{diag}).
    \end{enumerate}
    
    Then
    \[\max_{j\in [m],k\in [m]}\left\{\delta_{jk}\right\}\leq \max_{\sigma\in S}\left\{\absl*{g'(\sigma)}, \absl*{\frac{g(\sigma)}{\sigma}}\right\}.\]
    
    Therefore,
    \[\normF{\Phi_{g}(X)-\Phi_{g}(Y)}\leq \normF{X-Y}\cdot \max_{\sigma\in S}\left\{\absl*{g'(\sigma)}+\absl*{\frac{g(\sigma)}{\sigma}}\right\},\]
    as claimed.
\end{proof}

\section{Proofs of Lemma \ref{ineq}, Lemma \ref{ppcl} and Proposition \ref{ovec}}\label{appB}
\begin{proof}[Proof of Lemma \ref{ineq}]
    Equation (\ref{eq:clos}) suggests Statement (\ref{eq:cond1}) always holds. Hence $\normF{S}=\normF{A}$.
    
    Using Lemma \ref{msam} twice, the following two inequalities simultaneously hold for matrices $A$, $S$ and~$W$ in Algorithm \ref{alg1} with probability at least $1-2\eta/3$:
    \[\normF{A^{*}A-S^{*}S}\leq \theta\normF{A}^{2},\]
    \[\normF{SS^{*}-WW^{*}}\leq \gamma\normF{S}^{2}.\]
    
    Thus with probability at least $1-2\eta/3$, Statements (\ref{eq:cond1}), (\ref{eq:cond2}) and (\ref{eq:cond3}) simultaneously hold. We now show that in this case, Statements (\ref{eq:cond4}) and (\ref{eq:cond5}) always hold. 
    
    Using Weyl's inequality (Lemma \ref{weyl}) for $k=\min\{\rank(A^{*}A),\rank(S^{*}S)\}$, and the above conditions, we have
    \[\absl*{\sigma_{k}(S^{*}S)-\sigma_{k}{(A^{*}A)}}\leq \normL{A^{*}A-S^{*}S}\leq \normF{A^{*}A-S^{*}S}\leq \theta\normF{A}^{2} < \frac{\normL{A}^{2}}{4\kappa^{2}_{2}}.\]
    
    Now $\rank(S^{*}S)\leq \rank(A^{*}A)$ (as we sample rows from $A$). Since $\sigma_{\min}{(A^{*}A)}=\frac{\normL{A}^{2}}{\kappa^{2}_{2}}$ (by the definition of $\kappa_{2}$), we get 
    \[\sigma^{2}_{\min}(S)=\sigma_{\min}(S^{*}S) > \sigma_{k}(A^{*}A)-\frac{\normL{A}^{2}}{4\kappa^{2}_{2}} > \sigma_{\min}(A^{*}A)-\frac{\normL{A}^{2}}{4\kappa^{2}_{2}} = \frac{3\normL{A}^{2}}{4\kappa^{2}_{2}} >  \frac{\normL{A}^{2}}{2\kappa^{2}_{2}}.\] 
    
    By a similar argument we get 
    \[\sigma^{2}_{\max}(S)=\sigma_{\max}(S^{*}S) < \sigma_{\max}{(A^{*}A)} + \frac{\normL{A}^{2}}{4\kappa^{2}_{2}}=\normL{A}^{2}\left(1+\frac{1}{4\kappa^{2}_{2}}\right)<\frac{\normL{A}^{2}}{2\kappa_2^2}\left(2\kappa^{2}_{2}+1\right).\]
    
    Using Weyl's inequality (Lemma \ref{weyl}) again for $k'=\min\{\rank(SS^{*}),\rank(WW^{*})\}$ we obtain:
    \[\begin{split}
        \absl*{\sigma_{k'}(WW^{*})-\sigma_{k'}{(SS^{*})}}&\leq \normF{SS^{*}-WW^{*}}\\
        &\leq \gamma\normF{S}^{2}\\
        &\leq \frac{\normL{A}^{2}}{4\kappa_{2}^{2}\normF{A}^{2}} \normF{A}^{2}<\frac{\normL{A}^{2}}{4\kappa_{2}^{2}}.
    \end{split}\]
    
    Since $\sigma_{\min}(SS^{*})>\frac{3\normL{A}^{2}}{4\kappa^{2}_{2}}$ we finally obtain the lower bound \[\sigma_{\min}^{2}(W)=\sigma_{\min}(WW^{*}) > \frac{\normL{A}^{2}}{2\kappa^{2}_{2}}.\] 
    
    A similar argument gives the upper bound $\sigma_{\max}^{2}(W) < \normL{A}^{2}\left(1+\frac{1}{2\kappa^{2}_{2}}\right)$. 
\end{proof}

\begin{proof}[Proof of Lemma \ref{ppcl}]
    Let us define a function $h\colon \Realnn\to\Realnn$ as follows. For any $\sigma\in Q$ we define 
    $h(\sigma)=f(\sqrt{\sigma})\inv(\sqrt{\sigma})=f(\sqrt{\sigma})/\sqrt{\sigma}$,
    we define $h(0)=f(0)\inv(0)=0$, and we define $h(\sigma)$ arbitrarily when $\sigma\notin Q\cup\{0\}$. Since $f$ is differentiable in $L$, the function $h$ is differentiable in $Q$. From Equations (\ref{eq:cond4}) and (\ref{eq:cond5}) we know that
    $\Conv\left(s(SS^{*})\cup s(WW^{*})\setminus\{0\}\right) \subset Q$
    and can write $\Phi_{h}(SS^{*})=\Phi_{\inv}(S)\Phi_{f}(S^{*})$ and $\Phi_{h}(WW^{*})=\Phi_{\inv}(W)\Phi_{f}(W^{*})$. 
    
    Using the definition of $P$ and $P'$, we now have
    \[\begin{split}
        \normF{P'-P}\\
        &\hspace{-17mm}=\normF{\Phi_{\inv}(W)\Phi_{f}(W^{*})\Phi_{\inv}(W)\Phi_{\inv}(W^{*})-\Phi_{\inv}(S)\Phi_{f}(S^{*})\Phi_{\inv}(S)\Phi_{\inv}(S^{*})}\\
        &\hspace{-17mm}= \normF{\Phi_{h}(WW^{*})\Phi_{\inv}(WW^{*})-\Phi_{h}(SS^{*})\Phi_{\inv}(SS^{*})}\\
        &\hspace{-17mm}=\normF{\left\{\Phi_{h}(WW^{*})-\Phi_{h}(SS^{*})\right\}\Phi_{\inv}(WW^{*})+\Phi_{h}(SS^{*})\left\{\Phi_{\inv}(WW^{*})-\Phi_{\inv}(SS^{*})\right\}}\\
        &\hspace{-17mm}\leq \normF{\left\{\Phi_{h}(WW^{*})-\Phi_{h}(SS^{*})\right\}\Phi_{\inv}(WW^{*})}+\normF{\Phi_{h}(SS^{*})\left\{\Phi_{\inv}(WW^{*})-\Phi_{\inv}(SS^{*})\right\}}\\
        &\hspace{-17mm}\leq \normL{\Phi_{\inv}(WW^{*})}\normF{\Phi_{h}(WW^{*})-\Phi_{h}(SS^{*})}+\normL{\Phi_{h}(SS^{*})}\normF{\Phi_{\inv}(WW^{*})-\Phi_{\inv}(SS^{*})}.
    \end{split}\]
    
    Using Lemma \ref{fpsd} twice for $\Phi_{h}$ and $\Phi_{\inv}$, we obtain
    \[\begin{split}
        \normF{P'-P}&\leq \normL{\Phi_{\inv}(WW^{*})}\normF{WW^{*}-SS^{*}}\left(\max_{\sigma\in Q}\left\{\absl*{h'(\sigma)}+\absl*{\frac{h(\sigma)}{\sigma}}\right\}\right)\\
        &+\normL{\Phi_{h}(SS^{*})}\normF{WW^{*}-SS^{*}}\left(\max_{\sigma\in Q}\left\{\absl*{\inv'(\sigma)}+\absl*{\frac{\inv(\sigma)}{\sigma}}\right\}\right).\\
    \end{split}\]
    
    Now using (\ref{eq:cond3}) we obtain
    \[\begin{split}
        \normF{P'-P}&\leq \gamma\normF{S}^{2}\normL{\Phi_{\inv}(WW^{*})}\left(\max_{\sigma\in Q}\left\{\absl*{h'(\sigma)}+\absl*{\frac{h(\sigma)}{\sigma}}\right\}\right)\\
        &+\gamma\normF{S}^{2}\normL{\Phi_{h}(SS^{*})}\left(\max_{\sigma\in Q}\left\{\absl*{\inv'(\sigma)}+\absl*{\frac{\inv(\sigma)}{\sigma}}\right\}\right).
    \end{split}\]
    
    Since the nontrivial singular values of $SS^{*}$ and $WW^{*}$ lie in the set $Q$, the nontrivial singular values of $S$ and $W$ lie in set $L$ (i.e., if $\sigma\in Q$ then $\sigma^{1/2}\in L$). We can thus write the above equation as:
    \[\begin{split}
        \normF{P'-P}&\leq
        \gamma\normF{S}^{2}\left(\max_{\sigma\in Q}\left\{\absl*{\inv(\sigma)}\right\}\right)\left(\max_{\sigma\in Q}\left\{\absl*{h'(\sigma)}+\absl*{\frac{h(\sigma)}{\sigma}}\right\}\right)\\
        &+\gamma\normF{S}^{2}\left(\max_{\sigma\in Q}\left\{\absl*{h(\sigma)}\right\}\right)\left(\max_{\sigma\in Q}\left\{\absl*{\inv'(\sigma)}+\absl*{\frac{\inv(\sigma)}{\sigma}}\right\}\right).
    \end{split}\]
    
    By routine calculation,
    \[\max_{\sigma\in Q}\left\{\absl*{h'(\sigma)}\right\} = \max_{\sigma\in Q}\left\{\absl*{\frac{\sqrt{\sigma}f'(\sqrt{\sigma})-f(\sqrt{\sigma})}{2(\sqrt{\sigma})^{3}}}\right\}\leq \max_{\sigma\in L}\left\{\absl*{\frac{f'(\sigma)}{2\sigma^{2}}}+\absl*{\frac{f(\sigma)}{2\sigma^{3}}}\right\},\]
    which implies
    \begin{align*}
        \max_{\sigma\in Q}\left\{\absl*{\inv(\sigma)}\right\}\cdot\left(\max_{\sigma\in Q}\left\{\absl*{h'(\sigma)}+\absl*{\frac{h(\sigma)}{\sigma}}\right\}\right)\\
        &\hspace{-10mm}\leq  \max_{\sigma\in L}\left\{\frac{1}{\sigma^{2}}\right\}\cdot\left(
        \max_{\sigma\in Q}\left\{\absl*{h'(\sigma)}\right\}+\max_{\sigma\in L}\left\{\absl*{\frac{f(\sigma)}{\sigma^{3}}}\right\}\right)\\
        &\hspace{-10mm}\leq  \phi\max_{\sigma\in L}\left\{\frac{1}{2\absl*{\sigma}^{4}}\right\} + \Omega\max_{\sigma\in L}\left\{\frac{3}{2\absl*{\sigma}^{5}}\right\}\\
        &\hspace{-10mm}\leq  2\left(\frac{\kappa_{2}}{\normL{A}}\right)^{4}\left\{\phi+3\sqrt{2}\Omega\frac{\kappa_{2}}{\normL{A}}\right\}.
    \end{align*}
    
    Similarly, we get    
    \begin{align*}
        \max_{\sigma\in Q}\left\{\absl*{h(\sigma)}\right\}\cdot \left(\max_{\sigma\in Q}\left\{\absl*{\inv'(\sigma)}+\absl*{\frac{\inv(\sigma)}{\sigma}}\right\}\right)&\leq  \max_{\sigma\in L}\left\{\absl*{\frac{f(\sigma)}{\sigma}}\right\}\cdot \max_{\sigma\in L}\left\{\frac{1}{\sigma^{4}}+\frac{1}{\sigma^{4}}\right\}\\
        &\leq  \Omega\max_{\sigma\in L}\left\{\frac{2}{\absl*{\sigma}^{5}}\right\}\\
        &\leq  2\left(\frac{\kappa_{2}}{\normL{A}}\right)^{4}\left\{4\sqrt{2}\Omega\frac{\kappa_{2}}{\normL{A}}\right\}.
    \end{align*}

    Using these inequalities we finally obtain the upper bound    
    \begin{align*}
        \normF{P'-P}&\leq 
        \gamma\left(\normF{A}^{2}\right)\left(2\left(\frac{\kappa_{2}}{\normL{A}}\right)^{4}\left\{\phi+7\sqrt{2}\Omega\frac{\kappa_{2}}{\normL{A}}\right\}\right)\\
        &=2\gamma\normF{A}^{2}\left(\frac{\kappa_{2}}{\normL{A}}\right)^{4}\left\{\phi+7\sqrt{2}\Omega\frac{\kappa_{2}}{\normL{A}}\right\},
    \end{align*}
    as claimed.
\end{proof}

\begin{proof}[Proof of Proposition \ref{ovec}]
    Consider the same function $h\colon \Realnn\to\Realnn$ as in the proof of Lemma \ref{ppcl}. Remember that we have $\Phi_{h}(S^{*}S)=\Phi_{f}(S^{*})\Phi_{\inv}(S)$. As discussed in Section~\ref{sec-pre}, we also have $\Phi_{\inv}(S^{*})S=S^{*}\Phi_{\inv}(S)=\Pi_{\row(S)}$. We can thus write
    \[S^{*}PS=S^{*}(\Phi_{\inv}(S)\Phi_{f}(S^{*})\Phi_{\inv}(S)\Phi_{\inv}(S^{*}))S=\Pi_{\row(S)}\Phi_{h}(S^{*}S)\Pi_{\row(S)}=\Phi_{h}(S^{*}S).\]
    
    Similarly, observe that $\Phi_{h}(A^{*}A)A^{*}=\Phi_{f}(A^{*})\Pi_{\col(A)}=\Phi_{f}(A^{*})$. We can thus write:
    \[\begin{split}
        \nor{x-\Phi_{f}(A^{*})b} &= \nor{S^{*}P'SA^{*}b-(S^{*}PS)A^{*}b + \Phi_{h}(S^{*}S)A^{*}b-\Phi_{h}(A^{*}A)A^{*}b}\\
        &\leq \nor{S^{*}P'SA^{*}b-S^{*}PSA^{*}b}+\nor{\Phi_{h}(S^{*}S)A^{*}b-\Phi_{h}(A^{*}A)A^{*}b}\\
        &\leq \left(\normL{S^{*}}\normF{P'-P}\normL{S}+\normF{\Phi_{h}(A^{*}A)-\Phi_{h}(S^{*}S)}\right)\nor{A^{*}b}.
    \end{split}\]
    
    Using Lemma \ref{fpsd} and the definitions of set $L$ and $Q$ in Equation (\ref{eq:intL}), we get
    \[\begin{split}
        \nor{x-\Phi_{f}(A^{*})b}&\leq \normF{P'-P}\max_{\sigma\in L}\left\{\absl*{\sigma}^{2}\right\}\normL{A}\nor{b}\\
        &+\theta\normF{A}^{2}\left(\max_{\sigma\in Q}\left\{\absl*{h'(\sigma)}+\absl*{\frac{h(\sigma)}{\sigma}}\right\}\right)\normL{A}\nor{b}.\\
    \end{split}\]
    
    Now, similarly to the proof of Lemma \ref{ppcl}, we have
    \[\max_{\sigma\in Q}\left\{\absl*{h'(\sigma)}+\absl*{\frac{h(\sigma)}{\sigma}}\right\}\leq\left(\frac{\kappa_{2}}{\normL{A}}\right)^{2}\left\{\phi+3\sqrt{2}\Omega\frac{\kappa_{2}}{\normL{A}}\right\}.\]
    
    Using this inequality and Equation (\ref{eq:ppcl}), we get
    \[\begin{split}
        \nor{x-\Phi_{f}(A^{*})b}\\
        &\hspace{-18mm}\leq \left(2\gamma\normF{A}^{2}\left(\frac{\kappa_{2}}{\normL{A}}\right)^{4}\left\{\phi+7\sqrt{2}\Omega\frac{\kappa_{2}}{\normL{A}}\right\}\right)\frac{\normL{A}^{2}}{2\kappa_{2}^{2}}\normL{A}\nor{b}\\
        &\hspace{-18mm}+\theta\normF{A}^{2}\left(\frac{\kappa_{2}}{\normL{A}}\right)^{2}\left\{\phi+3\sqrt{2}\Omega\frac{\kappa_{2}}{\normL{A}}\right\}\normL{A}\nor{b}\\
        &\hspace{-18mm}\leq \left(\gamma\normF{A}^{2}\frac{\kappa_{2}^{2}}{\normL{A}}\left\{\phi+7\sqrt{2}\Omega\frac{\kappa_{2}}{\normL{A}}\right\}+\theta\normF{A}^{2}\frac{\kappa_{2}^{2}}{\normL{A}}\left\{\phi+3\sqrt{2}\Omega\frac{\kappa_{2}}{\normL{A}}\right\}\right)\nor{b}\\
        &\hspace{-18mm}= \frac{\epsilon}{2}+\frac{\epsilon}{2} = \epsilon.
    \end{split}\]
    
    Thus we obtain $\nor{x-\Phi_{f}(A^{*})b}\leq \epsilon$ when choosing the values for $\gamma$ and $\theta$ in the statement of the lemma (straightforward calculations show that for $\epsilon$ satisfying Inequality (\ref{eq:eps}) these values are in the ranges allowed for the parameters $\gamma$ and $\theta$ in Algorithm \ref{alg1}).
\end{proof}

\section{Complete analysis of Algorithm \ref{alg3}}\label{appC}
    Let us first show the correctness of Algorithm \ref{alg3}. Similarly to the analysis done in Theorem~\ref{th1}, with probability at least $1-\eta$, the vector $x'$ satisfies
    \[\nor{x'-\Phi_{f}(A^{*})b}\leq \frac{\epsilon_{2}\omega\alpha}{2}\nor{b}.\]
    
    Remember that the norm of the projection of $b$ on the column space of $\Phi_{f}(A)$ is at least $\alpha\nor{b}$.
    Consider $b=\sum_{i=1}^{m}b_{i}u_{i}$, where $u_{i}$ are the left singular vectors of matrix $\Phi_{f}(A)\in \Complex^{m\times n}$, where $k$ is the rank of this matrix. So the following inequality holds:
    \begin{equation}\label{eq:proj}
        \begin{split}
            \nor{\Phi_{f}(A^{*})b}^{2}&=\sum_{i=1}^{\min{(m,n)}}\absl{b_{i}f(\sigma_{i}(A))}^{2}\\
            &\geq \sum_{i=1}^{k}\absl{b_{i}f(\sigma_{i}(A))}^{2}\\
            &\geq \left\{\min_{i\in [k]}f(\sigma_{i}(A))\right\}^{2}\sum_{i=1}^{k}\absl*{b_{i}}^{2}\\
            &\geq \omega^{2}\alpha^{2}\nor{b}^{2}.
        \end{split}
    \end{equation}
    
    Thus using Inequality (\ref{eq:dist}) and (\ref{eq:proj}), the following inequality is true
    \[\nor{\mathcal{P}_{x'}-\mathcal{P}_{\Phi_{f}(A^{*})b}}_{TV}\leq \epsilon_{2}\frac{\omega\alpha\nor{b}}{\nor{\Phi_{f}(A^{*})b}}\leq \epsilon_{2}.\]
    
    Let us now analyze the complexity of Algorithm \ref{alg3}.
    Step 4 in Algorithm \ref{alg3} uses Lemma \ref{samp} and has $O(r^{2}C(S^{*},P'z))$ sample complexity and $O(r^{2}C(S^{*},P'z)\log^{2}{(nr)})$ time complexity, where
    \[C(S^{*},P'z)=\frac{\sum_{i=1}^{r}\nor{(P'z)_{i}S_{(.,i)}}^{2}}{\nor{S^{*}P'z}^{2}}\leq \frac{\left(\sum_{i=1}^{r}\absl{(P'z)_{i}}\nor{S_{(.,i)}}\right)^2}{\nor{S^{*}P'z}^{2}}.\]
    
    Using the Cauchy-Schwarz inequality we obtain
    \[C(S^{*},P'z)\leq \frac{\sum_{i=1}^{r}\absl{(P'z)_{i}}^{2}\sum_{i=1}^{r}\nor{S_{(.,i)}}^{2}}{\nor{S^{*}P'z}^{2}}=\frac{\nor{P'z}^{2}\normF{S}^{2}}{\nor{x'}^{2}}.
    \]
    
    Now using Equation (\ref{eq:cond1}), the bound $\normL{P'}\le\Omega\left(\frac{\sqrt{2}\kappa_{2}}{\normL{A}}\right)^{3}$, the inequality $\nor{z-SA^{*}b}\leq \epsilon''\sqrt{r}$ and then Equation (\ref{eq:proj}), we obtain:
    \[\begin{split}
        C(S^{*},P'z)&\leq
        \frac{8\frac{\Omega^{2}\kappa_{2}^{6}}{\normL{A}^{6}}\nor{z}^{2}\normF{A}^{2}}{(1-\epsilon_{2}/2)^{2}\omega^{2}\alpha^{2}\nor{b}^{2}}\\
        &\leq\frac{8\frac{\Omega^{2}\kappa_{2}^{6}}{\normL{A}^{6}}\left(\nor{SA^{*}b}+\epsilon''\sqrt{r}\right)^{2}\normF{A}^{2}}{(1-\epsilon_{2}/2)^{2}\omega^{2}\alpha^{2}\nor{b}^{2}}\\
        &\leq \frac{8\frac{\Omega^{2}\kappa_{2}^{6}}{\normL{A}^{6}}\left(\frac{\normL{A}}{\sqrt{2}\kappa_{2}}\left(2\kappa_{2}^{2}+1\right)^{1/2}\normL{A}\nor{b}+\frac{\epsilon_{2}\omega\alpha\nor{b}}{8\Omega\sqrt{\left(2\kappa_{2}^{2}+1\right)}}\left(\frac{\normL{A}}{\kappa_{2}}\right)^{2}\right)^{2}\normF{A}^{2}}{(1-\epsilon_{2}/2)^{2}\omega^{2}\alpha^{2}\nor{b}^{2}}.
    \end{split}\]
    Using the bounds from Statement (\ref{eq:cond4}) and neglecting terms with $\epsilon_{2}$, we can write
    \[\begin{split}
        C(S^{*},P'z)&=
        O\left(\frac{\frac{\Omega^{2}\kappa_{2}^{6}}{\normL{A}^{6}}\left(\frac{\normL{A}^{2}}{2\kappa_{2}^{2}}\left(2\kappa_{2}^{2}+1\right)\normL{A}^{2}\nor{b}^{2}\right)\normF{A}^{2}}{\omega^{2}\alpha^{2}\nor{b}^{2}}\right)\\
        &= O\left(\frac{\kappa_{2}^{4}}{\alpha^{2}\normL{A}^{2}}\frac{\Omega^{2}}{\omega^{2}}\normF{A}^{2}(2\kappa_{2}^{2}+1)\right) = O\left(\frac{\kappa_{2}^{6}}{\alpha^{2}\normL{A}^{2}}\frac{\Omega^{2}}{\omega^{2}}\normF{A}^{2}\right).
    \end{split}\]
    The complexity of Step 4 in Algorithm \ref{alg3} thus dominates the sample complexity.
    The time complexity, on the other hand, is still dominated by the computation of the singular value decomposition of matrix $W$, as in Algorithm \ref{alg2}.
\end{document}